\newtheorem{definition}{Definition}
\newtheorem{lemma}{Lemma}
\newtheorem{theorem}{Theorem}
\newtheorem{proposition}{Proposition}
\newtheorem{corollary}{Corollary}
\newtheorem{remark}{Remark}
\title{Randomness of formal languages\\via automatic martingales}
\author{Birzhan Moldagaliyev}
\begin{document}
\maketitle
\begin{abstract}
We define a notion of randomness for individual and collections of formal languages based on automatic martingales acting on sequences of words from some underlying domain. An automatic martingale bets if the incoming word belongs to the target language or not. Then randomness of both single languages and collections of languages is defined as a failure of automatic martingale to gain an unbounded capital by betting on the target language according to an incoming sequence of words, or a text. The randomness of formal languages turned out to be heavily dependent on the text. For very general classes of texts, any nonregular language happens to be random when considered individually. As for collections of languages, very general classes of texts permits nonrandomness of automatic families of languages only. On the other hand, an arbitrary computable language is be shown to be nonrandom under certain dynamic texts. 
\end{abstract}

\section{Introduction}
Theory of algorithmic randomness~\cite{Downey2010} tries to capture properties which make mathematical objects appear random.
The theory mostly deals with infinite binary sequences. There are three major paradigms in theory of algorithmic
randomness: unpredictability, incompressibility and measure-theoretic typicalness. Viewed from automata-theoretic 
perspective, the theory mostly operates in higher levels of computability where full strength of Turing machine
is assumed. \\
As for randomness of formal languages, there are several ways one can approach the subject. The most direct 
way is to transform given language $L$ into the infinite binary sequence $\chi_L$ using some canonical 
transformation, with subsequent identification of $L$ with $\chi_L$. Indeed, in theory of algorithmic randomness, a set $A$ is said 
to be abc-random if and only if its characteristic sequence $\chi_A$ is abc-random, where abc is an arbitrary notion of randomness. Another possibility for measuring randomness of formal languages is to measure how well (or badly) given language $L$ can be approximated with members of simpler language classes from 
automata theory, including classes of regular and context-free languages. This line of research was pursued in 
the work of Yamakami~\cite{Yamakami2011}.\\
In this paper we propose an alternative definition of randomness for formal languages based on ideas from
automatic learning theory \cite{Jain2016} and theory of algorithmic martingales~\cite{Downey2010}. We consider a dynamic environment 
when words from underlying domain $D$ arrive in some order, which might be far from canonical, and a martingale
with some automata-theoretic properties is expected to bet whether incoming word is in $L$ or not. An 
automatic martingale is said to \emph{succeed} on a language $L$ if it reaches arbitrary high levels of capital in its run. A 
language is said to be \emph{random} if no automatic martingale succeeds on it.\\
Given framework can also be interpreted in the form of two-player game between automatic martingale and
adversary. The aim of the automatic martingale is to ensure unbounded growth of its capital. On the other hand, 
the adversary attempts to bound the growth of martingale's capital. To understand this interpretation better, let us 
consider a framework where order of incoming words is fixed to be length-lexicographic. Since the order is fixed,
the adversary has no role to play. On the other hand, in a framework where arbitrary orders or texts are allowed, the adversary has a full power to influence success of automatic martingale by providing somehow 'difficult' words to bet. By studying interaction between allowed classes of ordering and resulting randomness of languages, we hope to understand randomness of formal languages better.
\section{Background}
The paper assumes familiarity with basics of automata theory such as notions of regularity, syntactic classes and pumping lemma. Below we briefly present an additional background material which seems to be necessary for understanding the main sections of the paper. 
\subsection{Automatic Relations}
Automatic relations~\cite{Khoussainov2012} extend a notion of regularity from languages over simpler spaces to languages over product spaces. Suppose we are given a $k$-ary relation $R\subseteq (\Sigma^*)^k$, where an alphabet $\Sigma = \{0,1\}$ is assumed to be binary throughout the paper. One might ask if there is some automatic way of computing given relation. A notion of automatic relation attempts to do that. For that, we write given $k$-tuple $t=(t_1,t_2,\ldots,t_k)\in (\Sigma^*)^k$ in a block form:
\begin{equation}
\begin{bmatrix}
t_1\\
t_2\\
\cdots\\
t_k
\end{bmatrix}
\end{equation}
To make rows homogenous, shorter rows are filled with a special symbol, say $\#$. To process such blocks, one uses finite automata which read one symbol across all rows at a time. Given relation $R$ is said to be \emph{automatic} if there is a finite automaton $M$ recognizing it.
\paragraph{Automatic functions} Given a notion of automatic relation, it is straightforward to define a notion of \emph{automatic function}. A function $f:(\Sigma^*)^m \to (\Sigma^*)^n$ is called automatic if its graph forms an automatic relation, i.e. $graph(f) = \{(x,f(x))\mid x\in (\Sigma^*)^m \}$ is automatic.
\paragraph{Closure under first order definition} One of the most useful tools in working with automatic relations is their closure under the first order definition~\cite{Khoussainov2012}. This property can be states as follows:
\begin{proposition}[\cite{Khoussainov2012}]
Let $R$ be a first-order definable relation from given functions $(f_1,f_2,\ldots,f_n)$ and relations $(R_1,R_2,\ldots,R_m)$. If each of these functions and relations is automatic, then $R$ is also automatic. 
\end{proposition}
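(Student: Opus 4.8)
The plan is to proceed in two stages: first eliminate the function symbols in favour of their graphs, so that only automatic relation symbols and equality remain, and then induct on the structure of the resulting first-order formula using closure properties of regular languages.

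\emph{Stage 1: removing functions.} Since each $f_i$ is automatic, $graph(f_i)$ is an automatic relation by definition. I would rewrite the defining formula so that function symbols occur only in atomic subformulas of the form $f_i(\bar u)=v$, by repeatedly pulling an innermost term $f_i(\bar u)$ out with a fresh existentially quantified variable; then each such atom is replaced by the relation $graph(f_i)(\bar u,v)$. Since the equality relation $\{(x,x):x\in\Sigma^*\}$ is obviously automatic, this reduces the problem to showing that the class of automatic relations is closed under $\neg,\wedge,\vee,\exists,\forall$ applied to automatic relations together with the automatic relation $=$.

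\emph{Stage 2: induction on formula complexity.} Recall that a $k$-ary relation is automatic exactly when the language of \emph{convolutions} $\langle t_1,\ldots,t_k\rangle$ of its tuples --- the word over $(\Sigma\cup\{\#\})^k$ obtained by stacking the $t_j$ in rows and right-padding shorter rows with $\#$ --- is regular, and that the set of well-formed convolutions (those in which every row has the shape $\Sigma^*\#^*$) is itself regular. The base case is immediate. For $\wedge$ and $\vee$ I would first cylindrify, adding tracks that are read but ignored so that both subformulas are presented as $k$-ary relations on the same variable tuple, and then take the product-intersection, resp.\ the union, of the witnessing automata; for $\neg$ I complement relative to the regular set of well-formed $k$-convolutions. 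The universal quantifier is handled via $\forall y\,\chi \equiv \neg\exists y\,\neg\chi$, so it introduces nothing new.

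The hard part will be the existential quantifier $\exists y\,\chi(\bar x,y)$, which amounts to a projection erasing the $y$-track. This is not literally a letter-to-letter homomorphism onto well-formed convolutions, because when the witness $y$ is longer than every $x_i$ the convolution $\langle\bar x,y\rangle$ carries trailing columns that are all $\#$ on the $\bar x$-tracks, and these must be stripped to recover $\langle\bar x\rangle$. I would deal with this by first saturating the regular witness language of $\chi$ under adding and deleting trailing columns that are all-$\#$ on the $\bar x$-tracks (a regular operation), then applying the coordinate-erasing homomorphism (regular languages being closed under homomorphic images), and finally intersecting with the regular set of well-formed $k$-convolutions; the result is precisely the convolution language of $\exists y\,\chi$. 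Stating and verifying this padding normalisation carefully is the only genuine obstacle --- everything else is a routine appeal to Boolean closure and closure under homomorphism for regular languages, organised by induction on the structure of the formula.
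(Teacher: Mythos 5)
Your proposal is sound, but note that the paper itself offers no proof of this proposition: it is stated as a known result with a citation to Khoussainov--Nerode, so there is no in-paper argument to compare against. What you give is essentially the standard textbook proof of closure of automatic relations under first-order definability: replace function symbols by their (automatic) graphs using fresh existentially quantified variables, then induct on the formula, using Boolean closure of regular languages (complementation taken relative to the regular set of well-formed convolutions), cylindrification to put subformulas over a common variable tuple, and projection for $\exists$, with $\forall$ reduced to $\neg\exists\neg$. You correctly identify the one genuinely delicate point, namely that erasing the witness track does not directly produce well-formed convolutions because of trailing columns that are $\#$ on the remaining tracks, and your fix (erase the coordinate by a letter-to-letter homomorphism, strip trailing all-$\#$ columns --- e.g.\ via a right quotient by the regular set of such columns --- and intersect with the well-formed convolutions) is the standard and correct remedy. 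One small point worth making explicit: the same padding issue arises already in cylindrification, since inside the larger convolution the arguments of a sub-relation may be padded beyond their own maximal length, so the witnessing automata must first be modified to tolerate trailing all-$\#$ columns on their own tracks; this is handled exactly as in your projection step and poses no difficulty.
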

\paragraph{Examples} Let us give few examples of automatic relations so that a reader not familiar with them, might see them in action, so to speak. \\
Consider relation $R\subseteq (\Sigma^*)^2$ such that $R(x,y)=1 \Leftrightarrow \vert x \vert < \vert y \vert$, i.e. the first string should be strictly shorter than the second one. To build a finite automaton $M=(S,\Sigma,f,s_0,F)$ recognizing $R$, let us have: $S = \{s_0,s_1\}$, $F=\{s_1\}$ and a transition function $f$ as follows:
\begin{align}
f(s_0,(a,b)) = 
\begin{cases}
s_1, & \text{ if } a=\#,\, b\neq \#\\
s_0, & \text{ otherwise }
\end{cases}
&&
f(s_1,(a,b)) = s_1, \text{ for all } (a,b)
\end{align}
In a similar vein, one could show automaticity of \emph{lexicographic} order, $<_{lex}$, on $\Sigma^*$ generated by the canonical ordering $(0<1)$ of the underlying alphabet. The lexicographic order induces so called \emph{length-lexicographic order} $<_{ll}$ on $\Sigma^*$ defined as:
\begin{equation}
x<_{ll}y \Leftrightarrow \vert x \vert < \vert y \vert \text{ or } (\vert x \vert = \vert y \vert \text{ and } x<_{lex}y)
\end{equation}	
Automaticity of this order follows from first-order definability property described earlier. The length-lexicographic order provides many benefits, with its linearity being one of the most notable ones. As a final example, let us state a following well-known fact:
\begin{proposition}
Any regular language $D$ can be embedded in $(\Sigma^*)^k$ for some $k$. Moreover, the embedding is automatic.
\end{proposition}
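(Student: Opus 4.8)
\emph{Reading of the statement.} I take an \emph{embedding} of $D$ into $(\Sigma^*)^k$ to be an injective map $\iota\colon D\to (\Sigma^*)^k$, and I take ``the embedding is automatic'' to mean that the three relations $\mathrm{graph}(\iota)\subseteq \Sigma^*\times(\Sigma^*)^k$, the image $\iota(D)\subseteq (\Sigma^*)^k$, and the inverse $\iota^{-1}\colon \iota(D)\to D$ are all automatic. The plan is to produce such an $\iota$ explicitly and then verify automaticity by leaning on closure under first-order definitions (Proposition~1) wherever possible, rather than hand-building automata.

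\emph{Construction of $\iota$.} First I would record the easy case: since $\Sigma=\{0,1\}$ is fixed and $D\subseteq\Sigma^*$ is regular, $D$ is already an automatic unary relation, so one may take $k=1$ and let $\iota$ be the inclusion $D\hookrightarrow\Sigma^*$. If one wants a genuinely higher-dimensional target -- for instance because $D$ is presented over an alphabet $\Gamma$ with $|\Gamma|>2$, or because a later argument wants the domain to live inside a product space -- I would instead fix $k=\lceil\log_2|\Gamma|\rceil$ (or any convenient $k\ge 1$), fix a fixed-width binary code $c\colon\Gamma\to\Sigma^k$, and set $\iota(\gamma_1\gamma_2\cdots\gamma_n)=(u_1,u_2,\ldots,u_k)$, where $u_j$ is the length-$n$ string whose $i$-th letter is the $j$-th bit of $c(\gamma_i)$. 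This is visibly injective and length-preserving.

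\emph{Verifying automaticity.} In block form a tuple $(\gamma_1\cdots\gamma_n,u_1,\ldots,u_k)$ has $u_1,\ldots,u_k$ all of the same length as the input word, so no $\#$-padding occurs among the last $k$ rows, and an automaton need only check, column by column, that the $k$-bit column read off $(u_1,\ldots,u_k)$ equals $c(\gamma_i)$ for the current input letter -- a memoryless finite check (which degenerates to equality when $\Gamma=\{0,1\}$ and $k=1$). Hence $\mathrm{graph}(\iota)$ is automatic. Then $\iota(D)$ is defined by $(x_1,\ldots,x_k)\in\iota(D)\iff \exists w\,(w\in D\wedge \iota(w)=(x_1,\ldots,x_k))$, so it is automatic by the existential-quantification clause of Proposition~1 applied to the automatic relations $\mathrm{graph}(\iota)$ and ``$w\in D$''; and $\mathrm{graph}(\iota^{-1})$ is just $\mathrm{graph}(\iota)$ with its coordinates permuted, hence automatic as well.

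\emph{Main obstacle.} The statement is folklore, so the only real care needed -- and thus the ``hard part'' -- is bookkeeping around the padding symbol and row lengths when $k>1$: one must ensure the definition of $\iota$ forces all $k$ output components to have equal, predictable length, so the convolution automaton never has to reconcile rows of mismatched length with unbounded memory, and one must note that the existential quantifier defining $\iota(D)$ ranges over a word $w$ whose length is pinned down by $|x_1|,\ldots,|x_k|$ (true precisely because $\iota$ is length-preserving). With these points checked all the required relations are automatic. I would also remark that when $D$ is finite the statement is immediate, since every finite relation is automatic.
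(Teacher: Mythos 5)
Your proof is correct and follows essentially the same route as the paper: a letter-wise fixed-width binary code $\Gamma\to\Sigma^k$ applied coordinate-wise, with injectivity immediate, the graph recognized by a columnwise finite check, and the image (the paper's $Im(D)$) obtained via closure under first-order definitions. Your extra remarks on the $k=1$ case, the inverse map, and the padding/length bookkeeping are harmless refinements of the same argument.
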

\begin{proof}
Let $\Gamma$ be an underlying alphabet of $D$ of size $n$. Then there is the smallest integer $k$ such that $n\le 2^k$. Let $\phi$ be an injective map $\phi:\Gamma \to \Sigma^k$. We can naturally extend $\phi$ to be the mapping between $D$ into $(\Sigma^*)^k$ by applying $\phi$ letter-wise. In other words, the image of the word $x=x_0x_1\ldots x_{r-1}$ is given by $\phi(x)=\phi(x_0)\phi(x_1)\ldots \phi(x_{r-1})$. Clearly, this mapping is injective and automatic, hence it is an automatic embedding. Furthermore, the image of $D$ can be shown to be a regular language, simply because:
\begin{equation}
y\in Im(D) \Leftrightarrow \exists x\in D(\phi(x)=y)
\end{equation}
is a first-order definable formula. This completes the proof.
\end{proof}
Above fact tells that any regular language $D$, no matter how complicated, might be considered as a regular language inside $(\Sigma^*)^k$ for some $k$. This in turn shows universality of regular domains of the type $(\Sigma^*)^k$, which are used as bases for definition of automatic martingales, coming shortly. 
\subsection{Automatic Structures}
A notion of the structure plays an important role in mathematics. A mathematical structure is a set with operations and relations defined on it. Automata theory can be used to study some of mathematical structures. A structure $\mathcal{M} = (M,f_1,f_2,\ldots,f_n,R_1,R_2,\ldots,R_m)$, where $M$ is an underlying set, $f_i$'s are functions and $R_j$'s are relations is called automatic if:
\begin{itemize}
\item $M$ is a regular language;
\item Each $f_i$ is an automatic function;
\item Each $R_j$ is an automatic relation.
\end{itemize}
Moreover, structures which are isomorphic to automatic structures are also called automatic. 
\subsection{Dyadic rationals, $\mathbb{Q}_2$}
A dyadic rational is a rational number given in the form $\frac{a}{2^b}$ where $a$ is an integer and $b$ is a natural number. The collection of all dyadic rational numbers $\mathbb{Q}_2$ form a commutative ring with standard addition and multiplication. Moreover, $\mathbb{Q}_2$ inherits metric topology from $\mathbb{R}$. This allows to define limiting processes which might end up outside of $\mathbb{Q}_2$, though. 
\paragraph{Automatic presentation} Any element $a\in \mathbb{Q}_2$ can be presented as:
\begin{equation}
a = (-1)^s(\sum_{i\in\mathbb{Z}}a_i2^i)
\end{equation}
with coefficients $a_i\in\{0,1\}$ and $s\in\{0,1\}$, where finitely many of them being nonzero. Let $n,m$ be the greatest and smallest nonzero indices respectively, then $a$ can be presented in the following form:
\begin{equation}
a = 
\begin{bmatrix}
a_0 & a_1 & a_2 & \ldots & a_n & \ldots & \# \\
s & a_{-1} & a_{-2} & \ldots  & \ldots & \ldots & a_m
\end{bmatrix}
\end{equation}
Thus, the above presentation of dyadic rationals is identified with a product space $(\Sigma^*)^2$, which is clearly regular language. Moreover, observe that addition of two dyadic rationals is automatic in this representation. A simple reason for this is a familiar addition with carrying. Each time the presence of a carry should be stored in the memory, which is something finite automata are capable of doing. For fuller exposition of this fact one can refer to \cite{Nies2007} 
\paragraph{Multiplication in this presentation} Let us observe how multiplications by $2$ and $2^{-1}$ are performed in this presentations. They correspond to shifting each row to the left or right. More explicitly:
\begin{align*}
2a =&
\begin{bmatrix}
a_{-1} & a_0 & a_1 & \ldots & a_n  & \ldots & \# \\
s & a_{-2} & a_{-3} & \ldots  & \ldots & \ldots & a_m
\end{bmatrix}\\
2^{-1}a = &
\begin{bmatrix}
a_1 & a_2 & a_3 & \ldots & a_n & \ldots & \# \\
s & a_0 & a_{-1} & \ldots & \ldots & \ldots & a_m
\end{bmatrix}
\end{align*}

Since shifts by left and right can be executed by automatic functions, multiplications by $2$ and $2^{-1}$ are automatic. Due to closure under first-order definition, we have that any composition of above multiplications paired with additions is also automatic. Hence, a multiplication by any fixed dyadic rational is automatic. Observe that a multiplciation of two dyadic rationals is not automatic in this presentation. Let us state this well-known fact together with a short proof:
\begin{proposition}
Multiplication is not automatic in the given presentation of dyadic rationals
\end{proposition}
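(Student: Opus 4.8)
The plan is to exploit the fact that multiplication makes the bit-length of the output grow too fast, whereas automatic functions can only stretch lengths by an additive constant. Concretely, I would argue by contradiction: assume the ternary relation $R_\times = \{(x,y,z) : z = x\cdot y\}$ on $\mathbb{Q}_2$ is automatic, and derive a violation of a length bound.

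First I would reduce to the squaring function. If $R_\times$ is automatic, then the graph of $g(x) = x^2$ is first-order definable from automatic ingredients, namely $g(x) = z \iff \exists y\,(y = x \wedge R_\times(x,y,z))$, using that equality and $R_\times$ are automatic; by the first-order closure property this graph is automatic, so $g$ is an automatic total function on $\mathbb{Q}_2$. Next I would invoke (or quickly prove) the standard fact that automatic functions are length-bounded: there is a constant $c$ such that $|g(x)| \le |x| + c$ for all $x$, where $|\cdot|$ is the width of the two-row block presentation. The short proof is a pumping argument on the automaton recognizing $\mathrm{graph}(g)$: in the convolution of $(x, g(x))$, once the columns carrying $x$ are exhausted the machine reads a tail of columns of the form $(\#,\dots,\#,b_1,\dots)$; if this tail is longer than the number of states, a state repeats, and pumping that loop yields an accepted convolution of $(x,z)$ with $z \ne g(x)$, contradicting single-valuedness of $g$.

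Finally I would exhibit a witness family that defeats the length bound. Take $x = 2^n$: in the given presentation its block has width $n + O(1)$ (its top row is $0^n 1$ and its bottom row is essentially just the sign bit), whereas $g(x) = 2^{2n}$ has a block of width $2n + O(1)$. Hence $|g(x)| - |x| \to \infty$ as $n \to \infty$, contradicting $|g(x)| \le |x| + c$. Therefore $R_\times$ is not automatic. As a self-contained alternative that bypasses the length lemma, one can apply the pumping lemma directly to the regular language of convolutions encoding $R_\times$, using the encodings of $(2^n, 2^n, 2^{2n})$ and pumping the long block of identical columns that appears once the rows of the two factors are used up; this produces accepted encodings of $(2^n, 2^n, 2^{2n-j})$ with $j \ge 1$, which is again absurd.

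\textbf{Main obstacle.} The genuine content is the length argument; the only delicate point is bookkeeping about the nonstandard two-row presentation of $\mathbb{Q}_2$ — verifying that "width" is the right notion of length, that the length-bound lemma applies to functions between automatic substructures of $(\Sigma^*)^k$ rather than only to functions on $(\Sigma^*)^1$, and that the pumped convolution appearing in the lemma (or in the direct argument) really decodes to a legitimate pair/triple of dyadic rationals. All of this is routine once the encoding is pinned down.
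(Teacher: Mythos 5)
Your proposal is correct and follows essentially the same route as the paper: the paper also takes the witness $a=2^p$, notes that $a^2=2^{2p}$ has a representation roughly twice as long, and pumps the portion of the output beyond the input's length to contradict uniqueness of the product. Your main route merely packages this pumping step as the standard length-bound lemma for automatic functions, and your "self-contained alternative" at the end is precisely the paper's argument.
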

\begin{proof}
Suppose contrary, and let $p$ be a pumping constant corresponding to the given automatic multiplication. Consider $a = 2^p$, with a representation of length $p+1$. Observe that $a\times a = 2^{2p}$ has the presentation of length $2p+1$. This means that second half of $a^2$ with length $p$ can be pumped up, thus violating uniqueness of multiplication. This leads to a contradiction.
\end{proof}
\paragraph{Order relation} Let us now turn our attention to ordering on $\mathbb{Q}_2$. Our aim is to show that the standard order $<$ on $\mathbb{Q}_2$ is automatic. To show that, we need some intermediate facts. Let us consider following relations defined on $\mathbb{Q}_2$:
\begin{enumerate}
\item $z(a)$: checks if $a=0$;
\item $p(a)$: checks if $a>0$;
\item $l(a,b)$: given a pair of elements $a,b$, this relation checks if $a<b$.
\end{enumerate}
\begin{proposition}
In given representation of $\mathbb{Q}_2$, all of $z(a),p(a)$ and $l(a,b)$ are automatic relations
\end{proposition}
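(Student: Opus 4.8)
The plan is to handle $z$, $p$, $l$ in that order, each reduced to the previous ones and to the automaticity of addition noted above, using closure of automatic relations under first-order definitions (Proposition~1). For $z$: a block codes the number $0$ exactly when every digit position is $0$ or $\#$, i.e. every symbol of the top row and every symbol of the bottom row other than the leading sign symbol lies in $\{0,\#\}$; this is a regular condition on the coding alphabet, so $z$ is automatic. For $p$: since $a>0$ iff $a\neq 0$ and the sign symbol equals $0$, we have $p(a)\Leftrightarrow\neg z(a)\wedge(s=0)$; the condition ``the leading symbol of the bottom row is $0$'' is regular and $z$ is automatic, so Proposition~1 gives automaticity of $p$. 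Note there is no circularity here, as neither $z$ nor $p$ refers to $l$.

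For $l$ I would reduce comparison to sign-testing through subtraction instead of building a comparison automaton directly. Negation $a\mapsto -a$ is automatic: the map flipping the sign symbol (and leaving all other columns untouched) is plainly automatic, and on $0$ either this flip already yields another admissible code of $0$, or, if only canonical codes are admitted, one takes $\mathrm{neg}(a)$ to be ``$a$ if $z(a)$, the sign-flip of $a$ otherwise'', which is first-order over automatic ingredients. Since addition is automatic, so is $\mathrm{sub}(a,b)=\mathrm{add}(a,\mathrm{neg}(b))$. Finally $a<b\Leftrightarrow b-a>0\Leftrightarrow p\bigl(\mathrm{sub}(b,a)\bigr)$, which is first-order definable from the automatic function $\mathrm{sub}$ and the automatic relation $p$, so Proposition~1 yields automaticity of $l$.

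The step needing the most care is the bookkeeping around the code layout: fixing the exact position of the sign symbol, the storage convention when all exponents are nonnegative or all negative, which codes of $0$ are admitted, and confirming that the addition automaton respects those conventions so that the conditions above are visibly regular. If one prefers a self-contained argument for $l$, the alternative is to construct the comparison automaton on the convolution of the two codes directly: it carries a flag in $\{<,=,>\}$ for the integer parts (top rows, scanned least-significant-bit first, so the flag is overwritten whenever the current column's two bits differ) and a second flag in $\{<,=,>\}$ for the fractional parts (bottom rows, scanned most-significant-bit first, so the flag latches at the first differing column, with $\#$ treated as $0$), and at the end combines these two flags with the two sign symbols by the obvious case analysis. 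This is a finite computation, and I expect its only real difficulty to be notational rather than conceptual.
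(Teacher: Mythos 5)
Your handling of $z$ and $p$ coincides with the paper's: the paper checks $z$ by requiring all digit rows to lie in $0^*$ and sets $p(a)\Leftrightarrow (s=0)\wedge \neg z(a)$, exactly as you do. For $l$ you take a mildly different route: the paper simply writes $l(a,b)\Leftrightarrow \exists c\,[p(c)\wedge a+c=b]$ and invokes closure of automatic relations under first-order definitions, whereas you first build an explicit negation and subtraction and then set $l(a,b)\Leftrightarrow p(\mathrm{sub}(b,a))$. Both arguments are correct and rest on the same two pillars (automaticity of addition and first-order closure); the paper's existential formulation buys you freedom from ever defining $\mathrm{neg}$, so the $+0$ versus $-0$ bookkeeping you rightly flag never arises, while your version is more constructive (it exhibits the witness $c=b-a$ rather than quantifying over it) at the cost of that extra care about codes of zero, which you do address. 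Your fallback of building a comparison automaton directly with two $\{<,=,>\}$ flags is unnecessary given the first-order route, and as sketched its final ``obvious case analysis'' would need one more check: when the signs differ and both magnitude flags read $=$, you must consult $z$ to avoid declaring $+0$ and $-0$ unequal (assuming such codes are admitted); this is a small omission in the alternative sketch only, not in your main argument.
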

\begin{proof}
1. Let us assume abovementioned presentation for $a\in\mathbb{Q}_2$. Observe that
\begin{equation}
z(a)=1 \Leftrightarrow (a_1a_2\ldots a_n\in 0^*)\text{ and }(a_{-1}a_{-2}\ldots a_{-m}\in 0^*)
\end{equation}
Since both relations on the right are automatic and automaticity is closed under intersections, we infer that $z(a)$ is indeed automatic.\\
2. Verifying automaticity of $p(a)$ relies on the previous result. Let us observe that:
\begin{equation}
p(a)=1 \Leftrightarrow s=0 \text{ and } z(a)=0
\end{equation}
Since both relations on the right hand side are automatic, we infer that $p(a)$ is automatic.\\
3. To show automaticity of $l(a,b)$ we make use of first-order closure of automatic relations. Namely,
\begin{equation}
l(a,b)=1 \Leftrightarrow \exists c[(p(c)=1)\text{ and }(a+c=b)]
\end{equation}
So, it follows that $l(a,b)$ is an automatic relation. This completes the proof of given proposition.
\end{proof}
The following well-known proposition sums up our observations regarding automatic presentations of dyadic rationals.
\begin{proposition}
A structure $(\mathbb{Q}_2,+,<,c(a_1),\ldots,c(a_k))$ is automatic, where $c(a_i)$ denotes a multiplication by $a_i$ with $a_i\in \mathbb{Q}_2$.
\end{proposition}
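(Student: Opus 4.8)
The plan is simply to verify the three defining clauses of an automatic structure for $\mathcal{M} = (\mathbb{Q}_2, +, <, c(a_1), \ldots, c(a_k))$, since nearly all of the required work has already been carried out in the preceding paragraphs. First I would pin down a concrete regular language $D \subseteq (\Sigma^*)^2$ serving as the domain, together with an injective naming $\mathbb{Q}_2 \to D$. The block representation displayed above is not literally injective as it stands: appending trailing $0$'s to either row yields a different string naming the same number, and the element $0$ admits both sign bits. So I would restrict to \emph{canonical} representations, those in which neither row ends in a $0$ unless it has length one, with $0$ itself assigned the single name in which both rows are the one symbol $0$ and $s=0$. This is plainly a regular condition on pairs of strings, since it only inspects the last symbol and the length of each row, so $D$ is regular, and by construction every dyadic rational has exactly one name in $D$.

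Next, for each operation and relation I would invoke the facts already established, composing with the canonicalization map where necessary. Addition was observed to be automatic on the block representation via column-wise carrying; since the normalization map $(\Sigma^*)^2 \to D$ that strips trailing zeros is itself automatic, the graph of $+ : D\times D \to D$ is first-order definable from automatic ingredients, hence automatic by the closure proposition. The order $<$ is exactly the relation $l(a,b)$ shown automatic in the previous proposition, and restricting an automatic relation to the regular set $D$ leaves it automatic. Finally, each $c(a_i)$ is multiplication by a \emph{fixed} dyadic rational: writing $a_i$ via its sign and binary expansion, $c(a_i)$ is a finite composition of the shift maps $x\mapsto 2x$ and $x\mapsto 2^{-1}x$ (automatic, as noted) together with additions and a possible negation, so first-order closure again yields automaticity, and one more composition with the normalization lands the output in $D$.

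Assembling these three observations with the definition of automatic structure completes the proof. The only genuine obstacle is the bookkeeping around a \emph{unique} regular presentation: one must check that the canonical-form language is really regular and that each of $+$, $<$, and the $c(a_i)$ respects, or can be post-composed to respect, this canonical form. An alternative, if one prefers to avoid canonicalization, is to keep the full domain $(\Sigma^*)^2$ and observe that the relation ``$x$ and $y$ represent the same dyadic rational'' is automatic (again by first-order closure, using $z$ and $+$), thereby presenting $\mathcal{M}$ as an automatic quotient structure; either route is routine given the closure property, requiring no new automata.
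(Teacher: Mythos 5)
Your proposal is correct and follows essentially the same route as the paper, which offers this proposition precisely as a summary of the preceding observations: automaticity of addition via carrying, of multiplication by a fixed dyadic rational via shifts and additions closed under first-order definitions, and of the order $<$ through the relations $z$, $p$, $l$. Your extra attention to uniqueness of the representation (canonical forms or an automatic quotient) is a careful refinement the paper leaves implicit, but it does not change the argument.
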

\subsection{Automatic Learning Theory}
The notion of automatic learning~\cite{Jain2012} was developed as an automata-theoretic equivalent of algorithmic learning theory developed by Gold~\cite{Gold1967}. In general, algorithmic learning theory deals with following basic parts. Given some collection of languages $\mathcal{L}$, a \emph{text} is an infinite sequence of words belonging to some $L\in\mathcal{L}$. The objective is to somehow infer a corresponding index of $L$ in $\mathcal{L}$ based on the given text. Classical algorithmic learning theory deals with a collection of recursively enumerable languages as a base and computable functions as tools of inference. On the other hand, automatic learning theory deals mainly with automatic families of languages~\cite{Jain2016} as a base and automatic functions as state updates, where the definition of automatic family reads as follows:
\begin{definition}[Automatic family \cite{Jain2016}]
A collection of languages $\mathcal{L} = \{L_e \}_{e\in E}$ is called automatic family if:
\begin{itemize}
\item $E$ is a regular language;
\item $\{(x,e)\mid x\in L_e \}$ forms an automatic relation.
\end{itemize}
\end{definition}
A state of a learner is given by a pair of memory and hypothesis, $S=M\times E$. An automatic learner is an automatic function $f$ which updates a current state given an upcoming word, $f:S\times D\to S$, where $D$ is a some automatic domain words are drawn from. It is desirable that an induced sequence of hypotheses should converge to a correct index. We are going to borrow two ideas from this theory. First, we are going to understand languages by a sequence of words from underlying domain, or simply by a text. Secondly, we are going to employ the idea of automatic transition functions.  
\subsection{Algorithmic martingales}
An algorithmic martingale from theory of algorithmic randomness \cite{Downey2010} is a function $d:\{0,1 \}^*\to \mathbb{R}$ satisfying so-called fairness condition:
\begin{equation}
2d(x) = d(x0)+d(x1),\quad \forall x\in\{0,1\}^*
\end{equation}
In theory of algorithmic randomness $d$ is interpreted as a betting strategy which allocates its available capital, $d(x)$, between two scenarios: next bit is $0$ and next bit is $1$. In probabilistic interpretation, the fairness condition ensures that expected capital value at the next stage is equal to current capital value under equiprobable distribution on $\{0,1\}$. Both interpretations are useful to have in mind. In constructing martingales, it is useful to think of a martingale as a dynamic process. On the other hand, it is useful to think of a martingale as fixed object in proving nonexistence of martingales satisfying particular relations. 
\paragraph{Definition of random objects} A martingale $d$ is said to \emph{succeed} on an infinite sequence $X$ if
\begin{equation}
\limsup_n d(X[n]) = \infty
\end{equation}
or $\{d(X[n]) \}_n$ is unbounded, where $X[n]$ refers to prefix of $X$ of length $n$. An infinite sequence $X$ is said to be \emph{abc-random} if there is no martingale satisfying condition \emph{abc} which succeeds on $X$, where \emph{abc} is arbitrary randomness notion. We are going to borrow two ideas from this theory. Firstly, we impose fairness conditions. Secondly, we define random objects in a spirit similar to the above definitions. 
\section{Definitions}
\subsection{State space}
In order to define a notion of an automatic martingale, we need to start with a notion of a state which consists of capital and memory. Capital values are given by the set $C$, in our case $C=\mathbb{Q}_2$. As for memory, we assume its elements to reside in a product space $(\Sigma^*)^i$ for some $i\ge 0$, which is assumed to be fixed after initial setting. A state space is given by the product $S=C\times M$. Note that presentation of the state space corresponds to convolution of capital and memory presentations. Given newly formed state space, we introduce projection map, $\pi:S\to C$, which projects state value to its capital value, i.e. $\pi(s)=c$ given $s=(c,m)$. Observe that $\pi$ is an automatic function in given presentation. Moreover let $\mathcal{S}=S^{\mathbb{N}}=\{\phi:\mathbb{N}\to S\}$ and $\mathcal{C}=C^{\mathbb{N}}=\{\psi:\mathbb{N}\to C\}$ be collections of infinite sequences of states and capital values respectively. Then $\pi$ can naturally be extended to the map between $\mathcal{S}$ and $\mathcal{C}$ as follows:
\begin{equation}
\pi(\phi)(n) = \pi(\phi(n))
\end{equation}
This map is going to be useful for us later on. 
\subsection{Automatic martingales}
An automatic martingale $f$ is meant to update a current state given an incoming data point satisfying some fairness conditions. Let us first clarify what do we mean by a data point. A data point $t$ is either word from underlying domain with the membership label in a target language or a special skip symbol $\#$, i.e. $t\in (D\times \Sigma)\cup\{\#\}=T$. We assume that the domain $D$ is given by some regular language. Automatic martingales are intended to describe some kind of randomness of languages inside the domain $D$. An \emph{automatic martingale} $f:S\times T\to S$ is an automatic function satisfying a given fairness condition:
\begin{align}
2\pi(s) &= \pi(f(s,x,0)) + \pi(f(s,x,1))& \text{ if } x\in D\\
\pi(s) &=\pi(f(s,t)) &\text{ if } t=\#
\end{align}
Above conditions can be interpreted as follows. Given a current state $s$, automatic martingale distributes its capital fairly between outcomes $x\in L(L(x)=1)$ and $x\not\in L(L(x)=0)$ for any word $x\in D$, where $L$ is a language under investigation. Second condition says that special symbol $\#$ does not alter capital value of the automatic martingale. The presence of special symbol $\#$ is attributed to the legacy from algorithmic learning theory, where sometimes incoming data might be void. An automatic martingale is assumed to act in infinite sequence of data points. To generate an infinite sequence of data points, we label some text $X:\mathbb{N}\to D\cup\{\#\}$ with a membership in $L$ as follows:
\begin{equation}
Z(n) = 
\begin{cases}
(X(n),L(X(n))), & \text{ if }X(n)\neq \#\\
\#, & \text{ otherwise}
\end{cases}
\end{equation}
Such labeling of the text $X$ with respect to the language $L$ is written as $Z=X\circ L$. In order to be valid, a sequence of data points should have infinitely many labeled words. Formally, it can be written as:
\begin{equation}
\forall n\, \exists m>n\, Z(m)\neq \#
\end{equation}
where $Z:\mathbb{N}\to T$ is a sequence of data points, further referred as a \emph{stream}. 
\subsection{Action of martingales and randomness of languages}
Let us now describe an action of automatic martingales on streams. Given an automatic martingale $f$ and some starting state $s_0$, we say that they form a \emph{setup} $d=(f,s_0)$. As data points arrive, automatic martingale $f$ updates its states. Assuming that a target language is $L$ and a text under consideration is $X$, the resultant stream is $Z=X\circ L$. The stream induces a sequence of states $\phi^S\in\mathcal{S}$ as follows
\begin{align}
\phi^S(0) &= s_0\\
\phi^S(n+1) &= f(\phi^S(n),Z(n))
\end{align}
In this way, $d$ can be regarded as a map from collection of streams $\mathcal{Z}$ to $\mathcal{S}$. Composing the map $d$ with the projection operator $\pi$ we obtain the map $d^{\pi}:\mathcal{Z}\to \mathcal{C}$. If $\pi^d(Z)$ turns out to be unbounded sequence of dyadic rationals, i.e. $\limsup(d^{\pi}(Z))=\infty$, we say that $d$ \emph{succeeds} on the language $L$ under the text $X$. Similarly, we say that a setup $d$ succeeds on a collection of languages $\mathcal{L}$ under a collection of texts $\mathcal{X}$ if $d$ succeeds on every $L\in\mathcal{L}$ under any text $X\in\mathcal{X}$.  A \emph{normed setup} refers to a setup $d=(f,s_0)$ such that $\pi(s_0)=1$. Finally, we are ready to define randomness for languages.
\begin{definition}[Randomness of Formal Languages]
A collection of languages $\mathcal{L}$ is said to be random under a class of texts $\mathcal{X}$ if there is no normed setup succeeding on $\mathcal{L}$ under $\mathcal{X}$.
\end{definition}
\begin{remark}
If a collection $\mathcal{L}=\{L\}$ consists of a single language, we simply refer to it as a language $L$. Similarly, a singleton class $\mathcal{X}=\{X\}$ is referred to as a text. 
\end{remark}

\paragraph{Possible texts} In the above definition of randomness there is a clear dependence on the underlying text which generates the stream. The randomness of given language might vary depending on the text under consideration. Below we present examples of classes of texts:
\begin{itemize}
\item The class of all valid texts, $\mathcal{T} = \{X:\mathbb{N}\to D\cup\{\#\}\mid \forall n\,\exists m>n\, (X(m)\neq\#)\}$. 
\item \emph{Infinite Range}, \(\mathcal{I}\). This class consists of texts $X$ such that range of $X$ is infinite. Put formally, \(\mathcal{I}=\{X\in \mathcal{T}\mid\ \text{Range of }X \text{ is infinite}\}\).
\item \emph{Exhaustive}, \(\mathcal{E}\). This class consists of texts which exhaust elements of the underlying domain $D$. Put formally, \(\mathcal{E}=\{X\in \mathcal{T}\mid \forall y\in D, \exists\, n\in \mathbb{N}\,(X(n)=y) \}\).
\item \emph{Repetition-free}, \(\mathcal{R}\). This class consists of texts with no repetitions of domain elements. Put formally, \(\mathcal{R}=\{X\in \mathcal{T}\mid X(n)=X(m)\neq \# \Rightarrow n=m \}\).
\item \emph{Ordered}, $\{X_{ll}\}$. This collection consists of a single text, which outputs all strings of $D$ in increasing  length-lexicographic order. 
\item \emph{Dynamic}, \(\mathcal{D}\). This collection consists of texts generated dynamically by automatic martingale itself. Let $g:S\to D\cup\{\#\}$ be an automatic function mapping from state space $S$ to underlying domain $D$ with the additional symbol. We say that a text $X:\mathbb{N}\to D\cup\{\#\} $ is generated by automatic function $g$ if $X(n) = g(s_n)$ where $s_n$ refers to a state of automatic martingale at stage $n$. We say that a text $X$ is \emph{dynamic} if it is generated by some automatic function. 
\end{itemize}

\section{Properties}
In this section we are going to study properties of the notions defined above. First we are going to study randomness of individual languages. Afterwards, we are going to investigate randomness of collections of languages. 
\subsection{Randomness of individual languages}
We are going to observe that randomness of individual languages depends very much on the class of text under consideration. We start our investigations from the most general class of texts and end with dynamic texts, which are set by automatic martingales themselves. 
\paragraph{Class of all texts} Let us consider randomness under the most general class $\mathcal{T}$ of all possible texts. 
\begin{theorem}
A language $L\subseteq D$ is random under the class $\mathcal{T}$ if and only if it is not regular.
\end{theorem}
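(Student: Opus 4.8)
The plan is to prove the two implications separately, passing to the contrapositive when convenient. Concretely I will show: (i) if $L$ is regular then some normed setup succeeds on $L$ under every text in $\mathcal{T}$, so $L$ is not random; and (ii) if $L$ is not random, i.e. some normed setup succeeds on $L$ under all of $\mathcal{T}$, then $L$ is regular.

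For (i) the key observation is that the stream $Z=X\circ L$ always presents each word $x$ together with its \emph{true} label $L(x)$, so a martingale that ``knows'' $L$ can stake its whole capital on the correct side at every word. Since $L$ is regular the relation $\{(x,b)\mid b=L(x)\}$ is automatic (first-order definable from the regular set $L$), and multiplication by the fixed dyadic $2$ is automatic; hence, by the first-order closure of automatic relations (Proposition~1) together with automaticity of $\pi$, the function $f$ sending $(s,x,b)$ to a state with capital $2\pi(s)$ when $b=L(x)$ and to a state with capital $0$ when $b\neq L(x)$ (and fixing the capital on $\#$) is automatic. It satisfies the fairness conditions because exactly one of $b=0,b=1$ agrees with $L(x)$. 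Starting from any $s_0$ with $\pi(s_0)=1$, every non-$\#$ data point doubles the capital, so along any valid text the capital tends to infinity and the setup succeeds; thus $L$ is not random.

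For (ii) fix a normed setup $d=(f,s_0)$ succeeding on $L$ under $\mathcal{T}$. I claim the adversary can build a text keeping the capital non-increasing at every word step, contradicting success. Defining this text by recursion along the induced state sequence, it suffices to show: for every state $s$ there is $x\in D$ with $\pi(f(s,x,L(x)))\le\pi(s)$. Suppose not for some $s$. Then for every $x\in D$ the martingale strictly profits on $(x,L(x))$, which by the fairness condition forces $L(x)=1\iff\pi(f(s,x,1))>\pi(s)$, and hence
\[
L=\{x\in D\mid \pi(f(s,x,1))>\pi(s)\}.
\]
Treating the single state $s$ (a single string, hence an automatic set) and the dyadic $\pi(s)$ as constants, this set is first-order definable from the automatic function $f$, the automatic projection $\pi$, the automatic order on $\mathbb{Q}_2$, and the regular domain $D$, so by first-order closure $L$ is regular, contradicting non-regularity. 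Therefore the adversary's text exists, it lies in $\mathcal{T}$ (every entry is a genuine domain word, so there are infinitely many non-$\#$ symbols), the capital stays bounded by $1$, and $d$ does not succeed; hence $L$ is random.

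The main obstacle is the reduction in part (ii): recognizing that the adversary's inability to find a ``harmless'' word from some state is precisely the statement that the automatic martingale, frozen at that state, decides membership in $L$, together with the small but essential point that a fixed state may legitimately be used as a parameter in a first-order definition over the ambient automatic structure. Everything else is routine use of the first-order closure of automatic relations and of the automaticity of the arithmetic and order on $\mathbb{Q}_2$ established in the background section; the asymmetry between the two directions is exactly that the easy direction exploits truthful labels in the stream, while the hard direction exploits that an automatic function can only decide regular predicates.
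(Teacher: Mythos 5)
Your proposal is correct and follows essentially the same route as the paper: the easy direction builds an explicit automatic setup that exploits the regularity of $L$ and the truthful labels in the stream, and the hard direction uses the adversary argument in which the failure to find a capital-non-increasing word at some state turns the frozen state into a parameter of a first-order (hence automatic) definition of $L$, forcing regularity. The only differences are cosmetic: you bet all capital (factors $2$ and $0$) where the paper uses $\tfrac{3}{2}$ and $\tfrac{1}{2}$, and you phrase the membership test as $\pi(f(s,x,1))>\pi(s)$ rather than comparing the two outcomes, which is equivalent under the fairness condition.
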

\begin{proof}
\textit{Forward direction}\\
We show validity of the contrapositive statement. Given a regular language $L$, we want to show that $L$ is not random under $\mathcal{T}$. In other words, we need to exhibit an automatic setup $d=(f,s_0)$ succeeding on $L$ under any text $X\in \mathcal{T}$. The regularity of $L$ allows to construct such setup with considerable ease. In this case, we do not need to use any memory, so $M=\Sigma^*$ should suffice. Let $s_0=(1,\varepsilon)$ is a starting state with unit capital value and empty string as memory value. We set an automatic martingale $f$ as follows:
\begin{align}
f((c,m),x,b)=
\begin{cases}
(\frac{3}{2}c,m) & \text{ if } L(x)=b\\
(\frac{1}{2}c,m) & \text{ otherwise}
\end{cases}
&&
f((c,m),\#)=(c,m)
\end{align}
Clearly, above function is automatic, due to the fact that the case distinction $L(x)=b$ is automatic thanks to regularity of $L$. It is clear that given setup $d=(f,s_0)$ succeeds on $L$ with respect to any text $X$, because $X$ contains infinitely many labeled elements from $D$.\\
\textit{Converse direction}\\	
Going for contrapositive statement, we show that nonrandomness under $\mathcal{T}$ implies regularity. Suppose $L$ is some nonrandom language under $\mathcal{T}$, so  there is an automatic setup $d=(f,s_0)$ succeeding on $L$ under any $X\in \mathcal{T}$. The idea is to construct an adversarial text $X$ which exhibits a regularity of $L$. We construct $X$ dynamically depending on the behaviour of $d$ under the current prefix of $X$. Suppose we have constructed $X$ up to prefix of length $n$, i.e. $X[n]=x_0,\, x_1,\ldots, x_{n-1}$. Let $s_n$ be a state obtained from processing $X\circ L[n]$ from initial state $s_0$. Consider a collection of words which do not increase the capital at the next stage:
\begin{equation}
D_n = \{x\in D\mid \pi(f(s_n,x,L(x)))\le \pi(s_n) \}
\end{equation}
If $D_n$ is nonempty we choose any element from it to append to $X[n]$. Observe that if there are infinitely many $n$'s such that $D_n\neq \emptyset$, then we succed to construct a text $X$ such that $\pi(s_n)\le 1$ for all states $s_n$ visited. This contradicts the fact that $L$ is nonrandom. Hence $D_n = \emptyset$ for some $n$. This means that
\begin{equation}
\pi(f(s_n,x,L(x)))>\pi(s_n) \text{ for all }x\in D
\end{equation}
Thus, we can make use of the fairness condition to check a membership in $L$:
\begin{equation}
x\in L \Leftrightarrow f(s_n,x,1)>f(s_n,x,0)
\end{equation}
As this condition involves only automatic function $f$ and fixed state $s_n$, it is automatic, hence $L$ is a regular language. 
\end{proof}
\paragraph{Class of exhaustive texts} From general class $\mathcal{T}$, let us now reduce the class under consideration to the class of exhaustive texts, $\mathcal{E}$. Before that, let us recall the definition of immunity as in theory of complexity and computability. Given a class of languages $\mathcal{L}$, a language $M$ is said to be $\mathcal{L}$-immune if:
\begin{itemize}
\item $M$ is infinite;
\item There is no infinite language $L\in\mathcal{L}$ contained in $M$.
\end{itemize}
The idea is that $M$ somehow avoids containing infinite members of $\mathcal{L}$. A language $M$ is called $\mathcal{L}$-bi-immune if both $M$ and $D\setminus M$ are $\mathcal{L}$-immune, where $D$ is the underlying ambient domain in which $M$ resides. We denote the class of all regular languages as REG. 
\begin{theorem}
Suppose that a language $L\subseteq D$ is random under the class of exhaustive texts, $\mathcal{E}$. Then $L$ is REG-bi-immune. 
\end{theorem}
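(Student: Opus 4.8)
The plan is to establish each conjunct in the definition of REG-bi-immunity by contraposition, using the hypothesis that $L$ is random under $\mathcal{E}$. First dispatch the two ``infiniteness'' requirements: since $\mathcal{E}\subseteq\mathcal{T}$, any normed setup succeeding on $L$ under all of $\mathcal{T}$ in particular succeeds under all of $\mathcal{E}$, so randomness under $\mathcal{E}$ implies randomness under $\mathcal{T}$, and Theorem~1 yields that $L$ is not regular. A finite language is regular, so $L$ is infinite; and since $D$ is regular, $D\setminus L$ must be infinite as well (otherwise $L = D\cap\overline{D\setminus L}$ would be regular). Thus it remains only to rule out infinite regular subsets of $L$ and of $D\setminus L$.

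The core step is the claim that if $L$ contained an infinite regular subset $R$, then $L$ would not be random under $\mathcal{E}$. Assume $R\subseteq L$ is infinite and regular (the case $R\subseteq D\setminus L$ is entirely symmetric, with the roles of the labels $0$ and $1$ swapped). I would build a memoryless normed setup $d=(f,s_0)$ with $s_0=(1,\varepsilon)$, where $f$ bets only on words of $R$: for $x\in R$ put $f((c,m),x,1)=(\tfrac32 c,m)$ and $f((c,m),x,0)=(\tfrac12 c,m)$, while for $x\in D\setminus R$ and for the skip symbol $f$ leaves the capital untouched regardless of the label. The fairness conditions hold in each case, since $2c=\tfrac32 c+\tfrac12 c$ and $2c=c+c$. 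Automaticity of $f$ follows from regularity of $R$ (making the case split ``$x\in R$'' automatic), automaticity of multiplication by the fixed dyadic rationals $\tfrac32$ and $\tfrac12$, and closure of automatic functions under first-order definitions.

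Next I would check that $d$ succeeds on $L$ under every $X\in\mathcal{E}$. Because $X$ is exhaustive and $R\subseteq D$ is infinite, the positions $n$ with $X(n)\in R$ are infinite in number; at each such position the labeled data point is $(X(n),1)$ since $R\subseteq L$, so the capital is multiplied by $\tfrac32$, and at every other position it is unchanged. Hence $\pi(\phi^S(n))$ is non-decreasing and unbounded, so $\limsup_n \pi(\phi^S(n))=\infty$, contradicting randomness of $L$ under $\mathcal{E}$. Therefore $L$ has no infinite regular subset, and by the symmetric construction neither does $D\setminus L$; combined with the infiniteness established above, this shows $L$ is REG-bi-immune.

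The argument is largely bookkeeping, so I do not expect a deep obstacle; the points needing care are the reduction in the first step — being precise that ``random under the smaller class $\mathcal{E}$'' is what lets us invoke Theorem~1 via randomness under the larger class $\mathcal{T}$ — and verifying that the advertised $f$ is genuinely automatic while its fairness condition is honoured even on the branch ``$x\in R$ with label $0$'', which never arises in an actual run but must still be defined consistently.
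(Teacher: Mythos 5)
Your proof is correct and takes essentially the same route as the paper: argue by contraposition and, given an infinite regular $R\subseteq L$ (or $R\subseteq D\setminus L$), build a memoryless normed automatic setup that bets $\tfrac32/\tfrac12$ on words of $R$ and leaves the capital unchanged elsewhere, which succeeds under every exhaustive text since $R$ is infinite. The only difference is that you dispatch the two infiniteness conjuncts of bi-immunity explicitly via the reduction $\mathcal{E}\subseteq\mathcal{T}$ and Theorem~1, a point the paper subsumes under its ``without loss of generality''; this is a harmless, slightly more careful addition rather than a different approach.
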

\begin{proof}
The idea is similar to the proof of Theorem 1. Going for contrapositive statement, suppose $L$ is not REG-bi-immune. Without loss of generality assume that there is an infinite regular language $R$ such that $R\subseteq L$. We want to show that $L$ is not random under $\mathcal{E}$. In other words, there should be an automatic setup $d=(f,s_0)$ succeeding on $L$ under any $X\in\mathcal{E}$. A regularity of $R$ allows to construct $d$ directly. Again, we do not need memory in this case, so setting $M=\Sigma^*$ suffices. Let $s_0=(1,\varepsilon)$ as in previous proof. We define automatic martingale $f$ as follows:
\begin{align}
f((c,m),x,b)=
\begin{cases}
(c,m) & \text{ if } R(x)=0\\
(\frac{3}{2}c,m) & \text{ if } R(x)=1,\, b=1\\
(\frac{1}{2}c,m) & \text{ otherwise}
\end{cases}
&&
f((c,m),\#)=(c,m)
\end{align}
It is clear that given automatic setup $d$ succeeds on $L$ under any exhaustive text. The case where $R\subseteq L^c$ can be dealt with in a similar manner. 
\end{proof}

Now we would like to present an application of this theorem. First, we need to verify some well-known facts presented as lemmas.
\begin{lemma}
Given a context-free language $L$ and fixed words $u$ and $v$, consider $L'=\{w:uwv\in L\}$. Then $L'$ is still context-free.
\end{lemma}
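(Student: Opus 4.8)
The plan is to obtain $L'$ from $L$ by composing three standard closure operations of the class of context-free languages: inverse homomorphism, intersection with a regular language, and (possibly erasing) homomorphic image. Write $\Sigma$ for the alphabet of $L$. First I would dispose of the trivial case: if $u$ or $v$ uses a letter outside $\Sigma$, then $uwv\notin\Sigma^*$ for every $w$, so $L'=\emptyset$ is context-free; hence from now on assume $u,v\in\Sigma^*$.

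Next, introduce two fresh symbols $\alpha,\beta\notin\Sigma$ and set $\Delta=\Sigma\cup\{\alpha,\beta\}$. Define the homomorphism $h:\Delta^*\to\Sigma^*$ by $h(a)=a$ for $a\in\Sigma$, $h(\alpha)=u$, and $h(\beta)=v$. Since context-free languages are closed under inverse homomorphism, $h^{-1}(L)$ is context-free, and since they are closed under intersection with regular sets, the language
\[
K := h^{-1}(L)\cap \alpha\,\Sigma^*\,\beta
\]
is context-free as well. Because the delimiters $\alpha,\beta$ occur nowhere else, every word of $\alpha\Sigma^*\beta$ factors uniquely as $\alpha w\beta$ with $w\in\Sigma^*$, and $h(\alpha w\beta)=uwv$; hence $K=\{\alpha w\beta: w\in\Sigma^*,\ uwv\in L\}$.

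Finally, apply the erasing homomorphism $g:\Delta^*\to\Sigma^*$ with $g(a)=a$ for $a\in\Sigma$ and $g(\alpha)=g(\beta)=\varepsilon$. Context-free languages are closed under homomorphic images, so $g(K)$ is context-free, and by construction $g(K)=\{w:uwv\in L\}=L'$, which finishes the argument.

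I do not expect a genuine obstacle here, since each step is a textbook closure property; the only points needing a line of care are (i) verifying that $K$ is exactly the stated set — which is precisely where the fresh delimiters $\alpha,\beta$ earn their keep, forcing $h$ on $\alpha\Sigma^*\beta$ to act as ``prepend $u$, append $v$'' — and (ii) observing that the relevant closures hold in full generality, so the degenerate cases $u=\varepsilon$ or $v=\varepsilon$ (making $h$ or $g$ erasing) cause no trouble. If a self-contained proof were preferred, the alternative would be a direct pushdown-automaton construction in three phases — internally feed $u$ to a PDA for $L$, then read the genuine input word, then internally feed $v$ and test acceptance, tracking the position within the fixed words $u,v$ in the finite control — but this needs more bookkeeping than the closure-property route, so I would not pursue it.
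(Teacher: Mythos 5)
Your proof is correct, but it follows a genuinely different route from the paper's. You obtain $L'$ purely from closure properties: you mark the intended positions of $u$ and $v$ with fresh delimiters $\alpha,\beta$, pull $L$ back along the homomorphism $h$ with $h(\alpha)=u$, $h(\beta)=v$, $h(a)=a$, intersect with the regular set $\alpha\Sigma^*\beta$, and erase the delimiters; since context-free languages are closed under inverse homomorphism, intersection with regular languages, and (erasing) homomorphic images, and since the delimiters force $K=\{\alpha w\beta: uwv\in L\}$, the image $g(K)=L'$ is context-free. The paper instead argues at the grammar level: it takes a Chomsky-normal-form grammar for $L$, treats the single-letter case $u=a$, $v=\varepsilon$ by adjoining dotted nonterminals $\dot X$ with rules $\dot X\to\dot Y Z$ for each $X\to YZ$ and $\dot X\to\varepsilon$ for $X\to a$ (so that $\dot X$ generates exactly those $w$ with $aw$ derivable from $X$), and then gets the general statement by composing such one-letter left and right quotients. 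Your version handles arbitrary $u,v$ in a single step and is shorter, at the cost of invoking three nontrivial closure theorems as black boxes; the paper's version is more self-contained and explicitly exhibits a grammar for $L'$, but requires the letter-by-letter composition and the correctness of the dotted-nonterminal construction. Your treatment of the degenerate cases (letters of $u,v$ outside $\Sigma$, or $u$ or $v$ empty, making $h$ or $g$ erasing) is also sound, since the closure properties hold for arbitrary homomorphisms.
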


\begin{proof}
To start with, consider a case where $u = a\in \Sigma$ and $v=\varepsilon$. Let $G=(V,\Sigma,R,S)$ be a context-free grammar in Chomsky normal form generating $L$. We are going to construct a context-free grammar $G'=(V',\Sigma,R',\dot{S})$ generating $L'$, thus showing that $L'$ is context-free. As for nonterminals, $V' = \{X,\dot{X}\mid X\in V,X\neq S\}\cup\{\dot{S}\}$. As for a new set of production of rules, we set:
\begin{itemize}
\item For each rule of type $X\to YZ$, we keep it and add a rule $\dot{X}\to \dot{Y}Z$;
\item For a rule $X\to a$, we keep it while adding $\dot{X}\to \varepsilon$.
\end{itemize}
Clearly, the new grammar $G'$ generates $L'$. A case of $u=\varepsilon$ and $v=a$ can be handled in a similar manner. As initial transformation from $L$ into $L'$ can be realized as a composition of transformations as given above, we conclude that $L'$ is indeed context-free. 
\end{proof}

\begin{lemma}
Any context-free language $L\subseteq D$ is not REG-bi-immune relative to a regular domain $D$.
\end{lemma}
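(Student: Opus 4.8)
The plan is to prove the contrapositive: if $L\subseteq D$ is context-free and $D$ is regular, then it is not the case that both $L$ and $D\setminus L$ are REG-immune. So I would assume, towards a contradiction, that both $L$ and $D\setminus L$ are REG-immune; in particular both are infinite and neither contains an infinite regular subset (so $L$ itself is not regular, and $D$ is infinite).

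First I would invoke the pumping lemma for context-free languages: there is $z=uvwxy\in L$ with $|vwx|\le p$, $|vx|\ge 1$, and $uv^iwx^iy\in L$ for every $i\ge 0$. If $v=\varepsilon$ then $x\ne\varepsilon$ and $uwx^*y\subseteq L$ is an infinite regular subset of $L$, contradicting REG-immunity; the case $x=\varepsilon$ is symmetric. Hence we may assume $v\ne\varepsilon$ and $x\ne\varepsilon$. Set $R:=uv^*wx^*y$ and $E:=R\cap D$, both regular, and note that the diagonal $\{uv^iwx^iy:i\ge 0\}$ is an infinite subset of $E$.

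The key structural step is that the ``index set'' $P:=\{(i,j)\in\mathbb{N}^2 : uv^iwx^jy\in E\}$ is a finite union $\bigcup_k A_k\times B_k$ with each $A_k,B_k\subseteq\mathbb{N}$ ultimately periodic. This follows by running a DFA for $E$ on the words $uv^iwx^jy$: the state reached after the block $v^i$, hence after $v^iw$, depends only on an ultimately periodic classification of $i$ into finitely many classes, and, for each resulting state, acceptance of the continuation $x^jy$ depends only on an ultimately periodic set of values of $j$. Since $P$ contains the whole diagonal, pigeonhole over the finitely many $k$ gives some $k$ with $A_k\cap B_k$ infinite; an infinite ultimately periodic set contains an infinite arithmetic progression $\{a+dt:t\ge 0\}$ with $d\ge 1$, so $(a,a+dt)\in A_k\times B_k\subseteq P$ for all $t\ge 0$. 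Therefore $G:=uv^aw(x^d)^*x^ay\subseteq E\subseteq D$ is an infinite regular language (infinite because $x\ne\varepsilon$).

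Finally I would pin down which words of $G$ lie in $L$ using Lemma 1: the language $L'=\{w' : uv^aw\,w'\,y\in L\}$ is context-free, so $L'\cap x^*$ is a context-free subset of $x^*$, and hence (applying the morphism that collapses $\Sigma$ to a single letter and using that context-free languages over a unary alphabet are regular) the exponent set $C:=\{n : uv^awx^ny\in L\}$ is ultimately periodic. Now compare the progression $C_0=\{a+dt:t\ge 0\}$ with $C$. If $C_0\setminus C$ is finite, then $G$ minus finitely many words is an infinite regular subset of $L$, contradicting REG-immunity of $L$. If $C_0\setminus C$ is infinite, it is ultimately periodic (intersection of ultimately periodic sets), hence contains an infinite arithmetic progression $\{a_0+d_0s:s\ge 0\}$, and then $uv^aw(x^{d_0})^*x^{a_0}y$ is an infinite regular subset of $G\setminus L\subseteq D\setminus L$, contradicting REG-immunity of $D\setminus L$. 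Either way we reach a contradiction. I expect the main obstacle to be the structural step — justifying that a regular language inside $uv^*wx^*y$ has an ``index set'' of the above product–periodic form and extracting a horizontal ray from it — together with the attendant bookkeeping over ultimately periodic sets (and the harmless non-uniqueness of the representation $uv^iwx^jy$); the DFA viewpoint, however, keeps all of this elementary.
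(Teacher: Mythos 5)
Your proof is correct, but it follows a genuinely different route from the paper's. You pump the context-free language $L$ itself, which produces a two-parameter family $uv^iwx^iy$ and then forces two extra pieces of work: (i) the DFA analysis of $E=uv^*wx^*y\cap D$, decomposing its index set into finitely many products of ultimately periodic sets so as to extract a one-parameter regular ray $G=uv^aw(x^d)^*x^ay\subseteq D$, and (ii) Lemma 1 combined with the fact that unary context-free languages are regular, to show the exponent set $C$ of $L$ along this ray is ultimately periodic, after which the dichotomy on $C_0\setminus C$ (finite vs.\ infinite) delivers an infinite regular subset of $L$ or of $D\setminus L$. The paper instead pumps the regular domain $D$ once, obtaining a single lane $uv^*w\subseteq D$ at the outset; the case split is then simply whether $M=L\cap uv^*w$ is finite (so $uv^*w\setminus M$ is an infinite regular subset of $D\setminus L$) or infinite (so Lemma 1 plus one application of the context-free pumping lemma to $N\subseteq v^*$ yields $v^m(v^k)^*\subseteq N$, hence an infinite regular subset of $L$). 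The paper's argument is shorter and needs only a pumping-style extraction rather than full ultimate periodicity; yours carries more bookkeeping but never pumps the domain, treats the degenerate cases $v=\varepsilon$, $x=\varepsilon$ explicitly, and proves slightly more along the way, namely that the membership pattern of $L$ along a regular lane is ultimately periodic. Both arguments rely on the infinitude guaranteed by the immunity hypothesis (yours of $L$, the paper's implicitly of $D$), and your observation that the non-uniqueness of the representation $uv^iwx^jy$ is harmless is indeed justified, since only membership of the composed strings is ever used and, for fixed $a$, the exponent $n$ is determined by the length of $uv^awx^ny$.
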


\begin{proof}
Let $p$ be a pumping constant of the regular domain $D$. Given some word $x\in D$ of length at least $p$, it is possible to write it as $x=uvw$ such that $uv^*w\subseteq D$. There are two cases to consider. If $M=L\cap uv^*w$ happens to be a finite set, then we have an infinite regular language outside $L$ in the form of $uv^*w\setminus L$. So, assume that $M$ is an infinite set. Since intersection of a context-free language with a regular language is context-free, $M$ is a context-free language. It can be viewed as $M=u\cdot N \cdot w$, where $N\subseteq v^*$. By Lemma 1, $N$ is a context-free language. Let $q$ be a pumping constant of pumping lemma for context-free languages corresponding to $N$. Let $y\in N$ with length at least $q$. According to the pumping lemma, it is possible to write $y$ as $y=abcde$, where $|bd|\ge 1$, such that $ab^ncd^ne\in N$ for $n\ge 0$. It is clear that $|bd| = k|v|$ for some $k$. Hence, $ab^ncd^ne=v^m\cdot (v^k)^n$ for some $m\ge 0$. Collecting those words we form the infinite regular language $S=\{ab^ncd^ne:n\ge 0\}\subseteq N$. Appending the prefix $u$ and the siffix $w$, we have that $u\cdot S \cdot w$ is an infinite regular language inside $L$. This shows that $L$ is not REG-bi-immune. 
\end{proof}

\begin{corollary}
Arbitrary context-free language $L\subseteq D$ is not random with respect to the text class $\mathcal{E}$.
\end{corollary}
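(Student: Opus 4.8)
The plan is to obtain the corollary as an immediate consequence of Theorem 2 together with Lemma 2, reading both in contrapositive form. Recall that Theorem 2 asserts that if $L\subseteq D$ is random under $\mathcal{E}$, then $L$ is REG-bi-immune; equivalently, if $L$ fails to be REG-bi-immune, then $L$ is not random under $\mathcal{E}$. Meanwhile, Lemma 2 establishes that no context-free language $L\subseteq D$ is REG-bi-immune relative to the regular domain $D$. Chaining these two facts yields that an arbitrary context-free $L\subseteq D$ is not random under $\mathcal{E}$, which is exactly the claim.

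In more detail, I would first invoke Lemma 2 on the given context-free $L\subseteq D$: since $D$ is a regular domain (this regularity is used, via the pumping lemma for $D$, inside the proof of Lemma 2), there is an infinite regular language contained either in $L$ or in $D\setminus L$. Hence $L$ is not REG-bi-immune. Then I would apply the contrapositive of Theorem 2 to conclude that $L$ is not random under the class of exhaustive texts $\mathcal{E}$. Concretely, unwinding Theorem 2's proof, the infinite regular witness $R$ supplied by Lemma 2 gives an explicit normed automatic setup $d=(f,s_0)$ — a martingale that leaves capital unchanged on words outside $R$ and bets the factor $3/2$ versus $1/2$ on words of $R$ according to the (regular, hence automatically decidable) predicate ``$x\in L$'' (or ``$x\in D\setminus L$'' in the other case) — and this setup succeeds on $L$ along every exhaustive text, since an exhaustive text must eventually present all the infinitely many elements of $R$.

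There is essentially no obstacle here beyond bookkeeping: the only point requiring a moment's care is matching the two possible conclusions of Lemma 2 (an infinite regular subset of $L$, or an infinite regular subset of $D\setminus L$) with the two symmetric cases handled at the end of the proof of Theorem 2, and noting that in both cases the constructed martingale is automatic because the relevant membership test is regular. One should also keep in mind that the argument genuinely needs $D$ to be regular, as both Theorem 2 and Lemma 2 rely on this; the corollary is therefore stated for context-free $L$ sitting inside a regular ambient domain, consistent with the standing assumption on $D$ throughout the paper.
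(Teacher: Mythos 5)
Your proposal is correct and matches the paper's intent exactly: the corollary is obtained by combining Lemma~2 (no context-free $L\subseteq D$ is REG-bi-immune) with the contrapositive of Theorem~2 (non-REG-bi-immune implies nonrandom under $\mathcal{E}$), which is precisely how the paper derives it. Your additional unwinding of the explicit $3/2$ versus $1/2$ betting setup from the proof of Theorem~2 is accurate but not needed beyond the two-step chaining.
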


\paragraph{Dynamic texts} In the introduction, we have mentioned that given framework can be considered as a two-player game between automatic martingale $f$ and adversary. The automatic martingale aims to increase its capital, while the adversary wants it to stay bounded. Variety of possible texts gives power the adversary to play against the automatic martingale. In this sense, the larger the allowed class of texts, the more power adversary has. But what happens, when power to choose the text shifts from the adversary to the automatic martingale? In other words, we are going to investigate the class of dynamic texts. 
\begin{theorem}
Let $L$ be a computable language. Then it is nonrandom under some dynamic text.
\end{theorem}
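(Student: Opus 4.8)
The plan is to exploit the defining feature of the dynamic-text class $\mathcal{D}$: there the martingale itself decides, through the automatic function $g$, which word it must bet on and at which stage, so it is free to emit the skip symbol $\#$ for arbitrarily many stages while it privately precomputes the label of the next word it intends to present. Fix a Turing machine $M_L$ deciding $L$; since $L$ is computable, $M_L$ halts on every input. Fix also the length-lexicographic enumeration $y_0 <_{ll} y_1 <_{ll} \cdots$ of $D$; the function sending $x$ to its $<_{ll}$-successor in $D$ is first-order definable from $<_{ll}$ and the regular predicate $x\in D$, hence automatic (for finite $D$, which is trivial anyway, the successor wraps around to $y_0$). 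The memory component $M$ of the setup will store three things: the word $y$ currently under consideration, a configuration of $M_L$ running on input $y$ (with tape symbols and the head/state recorded across a fixed number of rows of the product-space memory, larger alphabets being simulated by several binary rows just as in the embedding of a regular domain into $(\Sigma^*)^k$, so that a configuration of length $\ell$ occupies boundedly many rows of width $\ell$), and a bounded mode flag with values \emph{simulate}, \emph{present}, \emph{erase}, \emph{reinitialise}. The starting state is $s_0=(1,m_0)$, where $m_0$ encodes the word $y_0$, the initial configuration of $M_L$ on $y_0$, and the flag \emph{simulate}; in particular $\pi(s_0)=1$, so the setup is normed.

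I would then describe $f$ and $g$ phase by phase. In \emph{simulate} mode $g$ outputs $\#$ and $f$ advances the stored configuration by one step of $M_L$, leaving the capital unchanged; when that configuration becomes halting, $f$ reads off the bit $L(y)$, records it in the flag, and passes to \emph{present} mode. In \emph{present} mode $g$ outputs the stored word $y$, so the incoming data point is $(y,L(y))$; here $f$ uses the fair bet assigning capital $\tfrac{3}{2}\pi(s)$ to the recorded bit and $\tfrac{1}{2}\pi(s)$ to its complement, so $\pi(f(s,y,0))+\pi(f(s,y,1))=2\pi(s)$, and since the recorded bit equals $L(y)$ the capital is multiplied by $\tfrac{3}{2}$; then $f$ passes to \emph{erase} mode. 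In \emph{erase} mode $g$ outputs $\#$ and $f$ removes one column of the old configuration per stage, leaving the capital unchanged; once it is empty, $f$ computes the $<_{ll}$-successor $y'$ of $y$ in $D$, writes the initial configuration of $M_L$ on $y'$ (the \emph{reinitialise} bookkeeping), and returns to \emph{simulate} mode. On the state/input pairs that do not arise along this run we let $f$ keep the capital fixed, which trivially respects fairness. Because $M_L$ halts and each phase is finitary, every $y_k$ is presented exactly once and at infinitely many stages; thus the generated text $X(n)=g(s_n)$ is valid, and along the induced capital sequence the value is multiplied by $\tfrac{3}{2}$ infinitely often and never decreases, so it tends to infinity and $d$ succeeds on $L$ under $X$, i.e.\ $L$ is not random under the dynamic text $X$.

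The one substantive point, and the step I expect to be the main obstacle, is verifying that $f$ is an automatic function. The capital coordinate only ever multiplies by the fixed dyadic constants $\tfrac{3}{2}$ and $\tfrac{1}{2}$ or is left unchanged, which is automatic by the proposition on $(\mathbb{Q}_2,+,<,c(a_1),\ldots,c(a_k))$. For the memory coordinate one invokes the standard fact that a single transition of a Turing machine is a \emph{local} rewriting of the configuration: it inspects and alters only the bounded window around the head and lengthens the configuration by at most one column, and such bounded-window rewritings are automatic. Deleting one column, detecting the halting marker, projecting out the stored word, and producing the initial configuration on a given word are likewise automatic, and the $D$-successor was already noted to be automatic. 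Since the current phase is read off a bounded flag, the full transition is a first-order case distinction among automatic functions and regular predicates, hence automatic by the closure proposition. The only delicate requirement is that in a single stage the memory must not shrink (or grow) by more than a bounded amount — this is exactly why the bulk erasure of a halted configuration is carried out one column at a time across many $\#$-stages rather than in one stroke.
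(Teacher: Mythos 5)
Your proposal is correct and follows essentially the same route as the paper's own proof: store the input word and a configuration of the deciding Turing machine in the memory, have $g$ emit $\#$ during the simulation, present the word once the output is known, bet on the computed label, and advance to the length-lexicographic successor. The only differences are cosmetic (a $\tfrac{3}{2}$--$\tfrac{1}{2}$ bet instead of the paper's all-or-nothing doubling, and erasing the halted configuration one column at a time, a precaution that is not actually needed since an automatic function may shrink its output arbitrarily in one step).
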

\begin{proof}
Let $L$ be a computable language and $\phi_e$ be a Turing machine computing membership of words in $L$. We need to show an existence of a normed setup $d=(f,s_0)$ such that it succeeds on $L$ under some dynamic text. The idea is to simulate computations of $\phi_e$ on some specified inputs. As long as the simulation goes on, the text generating function $g$ sets next element of the text to be $\#$. When simulation completes, the automatic martingale uses its capital to bet on the corresponding outcome. \\
Having described the desired automatic martingale, let us proceed with its construction. We construct automatic setup $d$ and automatic function $g$ generating desired dynamic text at the same time. We set memory space $M = (\Sigma^*)^3$, which comprises of three values: input tape value $m^{I}$, work tape value $m^{W}$ and output tape value $m^{O}$. A reader might observe that given memory corresponds to the standard structure of Turing machine. As for the starting state $s_0$, we set $s_0 = (1,d_0,\varepsilon,\varepsilon)$, where $d_0=\min_{ll}(D)$ is a length-lexicographic minimum of the domain $D$. As for the automatic martingale $f$ and the automatic function $g$, they act as follows. We need to consider two cases:\\
1. The output is not computed, $m^{O} = \varepsilon$. In this case, $g$ sets the next value of the text to be $\#$, because the computations is not finished yet. As for $f$, it updates the value of $m^{W}$ according to the transition of working tape corresponding to $\phi_e$, leaving other components of memory and capital unchanged.\\
2. The output is computed, $m^{O}\in\{0,1 \}$. In this case, $g$ sets the next word to be $m^{I}$, because computation of $L(m^{I})$ has been finished. As for automatic martingale $f$ places all current capital to the outcome corresponding to $m^{O}$. At the same time, it clears the working tape, and assigns new input value. More formally:
\begin{equation}
f((c,m^{I},m^{W},m^{O}),(m^I,b))=
\begin{cases}
(2c,succ_{ll}(m^{I}),\varepsilon,\varepsilon), &\text{ if } b = m^{O}\\
(0, succ_{ll}(m^{I}),\varepsilon,\varepsilon), &\text{ otherwise}
\end{cases}
\end{equation}
Let us perform a quick verification exercise for automaticity of both $f$ and $g$. Firstly, given a state $s=(c,m)$ comprising capital and memory, case distiction $m^{O}=\varepsilon$ is automatic. Each individual transition between configurations of $\phi_e$ is automatic. Finally, since length-lexicographic order is automatic, computing length-lexicographic successor as in $succ_{ll}(m^{I})$ is automatic. Since both $f$ and $g$ are given in terms of these and simpler functions, both of them are automatic. 
\end{proof}

\paragraph{Random languages} So far we have discussed instances of nonrandom languages under specific texts. Now it is time to discuss instances of random languages under certain text types. To construct a random language we need to ensure failure of every automatic martingale on that language. It is rather difficult task, for there are various kinds of automatic martingales, which capable of targeting different features of the given language. Fortunately, history of computability theory and complexity theory is rich with constructions of languages meeting infinite number of requirements. One of the most well-known techniques used for this purpose is \emph{diagonalization} technique. Let us demonstrate diagonalization technique in our settings. For this let us first define some arithmetic operations on the space of setups $\mathcal{D}=\mathcal{F}\times S$, where $\mathcal{F}$ refers to the space of all automatic martingales. Given setups $d_1=(f_1,s_1)$, $d_2=(f_2,s_2)$ and a scalar $c\in \mathbb{Q}_2$ we define operations of addition and scalar multiplication as follows. Given any stream $Z$, we wish to have: 
\begin{align}
(d_1+d_2)^{\pi}(Z)&= d_1^{\pi}(Z) + d_2^{\pi}(Z)\\
(cd_1)^{\pi}(Z)&=cd_1^{\pi}(Z) 
\end{align}
where additon and multiplication on the right hand side refer to point-wise addition and scalar multiplication on $\mathcal{C}$, space of sequences with values in $\mathbb{Q}_2$. A natural question to ask if these operations are closed in the space of setups. This issue is addressed in the next theorem.
\begin{theorem}
Both operations of addition and scalar multiplication given above are closed on the space of setups.
\end{theorem}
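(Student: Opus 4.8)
The plan is to realise both operations concretely by running the constituent setups \emph{in parallel}: the new state will carry, inside its memory component, the individual capitals and memories of the operands, while its capital component displays the required combination (a sum, respectively a scalar multiple). Since addition on $\mathbb{Q}_2$ and multiplication by a fixed dyadic rational are automatic in the chosen presentation, and automatic functions are closed under first-order definitions, the interleaved transition function will again be automatic; the fairness conditions and the two defining identities will then follow by direct induction along the stream.

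\textbf{Addition.} Let $d_1=(f_1,s_1)$ and $d_2=(f_2,s_2)$ with $f_i:S_i\times T\to S_i$, $S_i=C\times M_i$, and write $s_i=(c_i^0,m_i^0)$ for the starting states. I would take the new state space to be $S=C\times M$ with $M=C\times C\times M_1\times M_2$ (a regular language of the required form), starting state
\begin{equation}
s_0=(c_1^0+c_2^0,\ (c_1^0,c_2^0,m_1^0,m_2^0)),
\end{equation}
and transition $f$ acting on $s=(c,(c_1,c_2,m_1,m_2))$ and a data point $t$ by first computing $(c_1',m_1')=f_1((c_1,m_1),t)$ and $(c_2',m_2')=f_2((c_2,m_2),t)$, and then returning $(c_1'+c_2',\ (c_1',c_2',m_1',m_2'))$. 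First I would check automaticity: the maps that extract $(c_i,m_i)$ from the memory and re-tuple the outputs are automatic, $f_1,f_2$ are automatic by hypothesis, and addition on $\mathbb{Q}_2$ is automatic, so $f$ is a first-order combination of automatic functions and hence automatic by Proposition 1. Next I would verify fairness: for $x\in D$, using the fairness of $f_1$ and of $f_2$,
\begin{equation}
\pi(f(s,x,0))+\pi(f(s,x,1))=2c_1+2c_2=2(c_1+c_2)=2\pi(s),
\end{equation}
and $\pi(f(s,\#))=c_1+c_2=\pi(s)$ follows from the $\#$-clauses for $f_1,f_2$. Finally, a routine induction on $n$ shows that along any stream $Z$ the memory at stage $n$ holds precisely the pair of $d_i$-states reached at stage $n$, so $d^{\pi}(Z)(n)=d_1^{\pi}(Z)(n)+d_2^{\pi}(Z)(n)$, which is the defining property of $d_1+d_2$.

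\textbf{Scalar multiplication.} Given $d_1=(f_1,s_1)$ with $s_1=(c_1^0,m_1^0)$ and a scalar $c\in\mathbb{Q}_2$, I would take $M=C\times M_1$, starting state $(c\cdot c_1^0,(c_1^0,m_1^0))$, and transition on $(c',(c_1,m_1))$ and $t$ that computes $(c_1',m_1')=f_1((c_1,m_1),t)$ and returns $(c\cdot c_1',(c_1',m_1'))$. Automaticity now uses, besides the ingredients above, the fact established earlier that multiplication by a fixed dyadic rational is automatic. Fairness is verified as before via $c\cdot 2c_1=2(c\cdot c_1)$ and $c\cdot c_1=\pi(s)$ for the $\#$-case, and the same induction gives $(cd_1)^{\pi}(Z)=c\,d_1^{\pi}(Z)$.

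The only point that needs real care — and the one I expect to be the main obstacle — is the automaticity bookkeeping: one must confirm that interleaving two automatic transition functions and then applying the $\mathbb{Q}_2$-arithmetic used to form the displayed capital stays inside the class of automatic functions, despite the growth of the memory product space. This is not a computation but an appeal to closure under first-order definability (Proposition 1) together with the previously established automaticity of addition on $\mathbb{Q}_2$ and of multiplication by a fixed dyadic rational. Once this is settled, the fairness conditions and the two capital identities are immediate inductions, so the space of setups is closed under both operations.
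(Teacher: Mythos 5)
Your construction is essentially identical to the paper's: in both cases the new setup runs the constituent setups in parallel, storing their full states in the memory component (your tuple $(c_1,c_2,m_1,m_2)$ is just the convolution of the two states $p,q$ used in the paper), displaying the sum respectively the fixed-scalar multiple of their capitals in the capital component, and appealing to automaticity of $f_1,f_2$, the projection, $\mathbb{Q}_2$-addition and multiplication by a fixed dyadic rational via closure under first-order definability. Your explicit fairness check and the induction along the stream only spell out what the paper states as straightforward, so the proposal is correct and follows the same route.
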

\begin{proof}
\textit{Addition}\\
Let $S_1$ and $S_2$ be state spaces of $d_1$ and $d_2$ respectively. Assume that $S_1$ is $n$-dimensional, while $S_2$ is $m$-dimensional, i.e. $S_1=(\Sigma^*)^n$ and $S_2=(\Sigma^*)^m$. Given two states $p\in S_1$ and $q\in S_2$, we define a corresponding state of desired automatic martingale as $(\pi(p)+\pi(q), (p,q))$. Note that a pair $(p,q)$ is realized as a convolution when it comes to automatic presentations. This means that memory $M$ for desired automatic martingale is given as a product $S_1\times S_2$, which makes it $m+n$-dimensional. Having defined state space, we are left to define an automatic martingale and a starting state. As for the automatic martingale $f$, it is defined as
\begin{equation}
f(s,t) = (\pi(f_1(p,t))+\pi(f_2(q,t)),(f_1(p,t),f_2(q,t)))
\end{equation}
where $s=(\pi(p)+\pi(q),(p,q))$. As for a starting state $s_0$, we define it as
\begin{equation}
s_0 = (\pi(s_1)+\pi(s_2),(s_1,s_2))
\end{equation}
Automaticity of $f_1,f_2,\pi$ and of addition in $\mathbb{Q}_2$ ensures that newly defined automatic martingale $f$ is automatic. It is straightforward to check that the fairness condition is satisfied for $f$. Finally, the construction ensures that newly formed setup $d=(f,s_0)$ satisfies the desired relation.\\
\textit{Scalar multiplication}\\
Given a state $p\in S_1$, we define a corresponding state of a desired automatic martingale as $(c\pi(p),p)$. This means that memory of desired automatic martingale given as a state of $f_1$. Having defined state space, we define automatic martingale $f$ and starting state $s_0$ as follows
\begin{equation}
f(s,t)=(c\pi(f_1(p,t)), f_1(p,t))
\end{equation}
where $s=(c\pi(p),p)$. As for a starting state $s_0$, we set
\begin{equation}
s_0 = (c\pi(s_1),s_1)
\end{equation}
Automaticity of $f_1,\pi$ and of scalar multiplication by fixed scalar in $\mathbb{Q}_2$ ensures that the newly formed automatic martingale $f$ is automatic. Again, the fairness condition is clearly preserved. The way construction is done ensures that newly formed setup $d=(f,s_0)$ satisfies desired relations. 
\end{proof}
We can even define infinite sums thanks to the metric topology on real line. Given a countable infinite collection $(c_i)_{i=0}^{\mathbb{N}}$ of positive scalars from $\mathbb{Q}_2$ with bounded sum, i.e. $\sum_i c_i<\infty$ and collection of normed setups $\{d_i\}_{i=0}^{\infty}$ let us define a setup $d = \sum_i c_id_i$ so that:
\begin{equation}
(\sum_{i=0}^{\infty}c_id_i)^{\pi}(Z) = \sum_{i=0}^{\infty}c_id_i^{\pi}(Z)
\end{equation}
for any stream $Z$, where sum and scalar multiplication on the right hand side correspond to point-wise addition of scalar multiplication on $\mathcal{C}$, space of sequences with values in $\mathbb{Q}_2$. Observe that given object still preserves fairness conditions due to elementary properties of limits. On the other hand, given object is no longer automatic, because it involves an infinite amount of data. Nevertheless, this abstract object is going to be useful for us later on. \\
The next theorem asserts the existence of random language given a stringent case of the ordered text. 
\begin{theorem}
There is a random language under $X_{ll}$ text. 
\end{theorem}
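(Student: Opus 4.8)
The plan is to diagonalize against every normed automatic setup simultaneously, using the infinite-sum construction introduced just before the statement. The first observation is that there are only countably many normed automatic setups: for each memory dimension $k\ge 0$ an automatic martingale $f:S\times T\to S$ is specified by a finite automaton recognizing its (automatic) graph, and there are only countably many such automata; a normed starting state has the form $s_0=(1,m_0)$ with $m_0\in(\Sigma^*)^k$, again countably many choices. (In fact these setups can even be enumerated effectively, since being the graph of a function and satisfying the fairness condition are first-order conditions over the automatic structure $(\mathbb{Q}_2,+,<,\ldots)$, hence decidable; but effectiveness is not needed here.) Fix an enumeration $d_0,d_1,d_2,\dots$ of all normed automatic setups.

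Next I would form the abstract aggregate setup $D=\sum_{i=0}^{\infty}2^{-i-1}d_i$, which is legitimate by the infinite-sum construction recalled above since $\sum_i 2^{-i-1}=1<\infty$ and every $d_i$ is normed; as noted there, $D$ still obeys the fairness condition, and $D^{\pi}(Z)(0)=\sum_i 2^{-i-1}\pi(s_0^{(i)})=1$ for every stream $Z$. Here I use, as is standard for martingales, that capital values remain non-negative along any run, so that the series converges termwise (each $d_i^{\pi}(Z)(n)$ is at most $2^n$) and, more importantly, so that $2^{-i-1}d_i^{\pi}\le D^{\pi}$ pointwise for every $i$.

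Now I would build the language $L\subseteq D$ bit by bit along the text $X_{ll}$. Assuming $D$ is infinite (otherwise $X_{ll}$ is not a valid text and the statement is vacuous for $D$), let $w_0<_{ll}w_1<_{ll}\cdots$ enumerate $D$; under $X_{ll}$ the induced stream $Z_L=X_{ll}\circ L$ simply reveals the bits $L(w_0),L(w_1),\dots$, and the state of each $d_i$ after $n$ steps depends only on $L(w_0),\dots,L(w_{n-1})$. Proceed by recursion on $n$: having fixed $L(w_0),\dots,L(w_{n-1})$, the two candidate values of $D^{\pi}(Z_L)(n+1)$, corresponding to $L(w_n)=0$ and to $L(w_n)=1$, average to $D^{\pi}(Z_L)(n)$ by fairness, so at least one of them is $\le D^{\pi}(Z_L)(n)$; let $L(w_n)$ be such a bit. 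Then $n\mapsto D^{\pi}(Z_L)(n)$ is non-increasing, hence $D^{\pi}(Z_L)(n)\le 1$ for all $n$, so $d_i^{\pi}(Z_L)(n)\le 2^{\,i+1}$ for all $n$ and all $i$. Consequently $\limsup_n d_i^{\pi}(Z_L)(n)\le 2^{\,i+1}<\infty$ for every $i$, i.e. no normed automatic setup succeeds on $L$ under $X_{ll}$, which is exactly the assertion that $L$ is random under $X_{ll}$.

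I expect the main obstacle to lie not in the diagonalization step itself (which is just the familiar ``play the losing bit'' argument applied to the single aggregate object $D$) but in nailing down $D$: one must check that the infinite sum is a well-defined, fair betting object on the stream being constructed and that controlling $D^{\pi}$ genuinely controls every $d_i^{\pi}$ — and the latter leans on non-negativity of capital, an assumption implicit in the paper's martingale definition that should be made explicit. A secondary point to handle carefully is the enumeration of setups across all memory dimensions; if one prefers to avoid the aggregate object, the same $L$ can be produced by a finite-injury-style construction treating $d_0,d_1,\dots$ in turn, but the aggregate-martingale route is cleaner and reuses precisely the machinery the paper has just developed.
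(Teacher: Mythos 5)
Your proof is correct, and its skeleton is the paper's: enumerate the (countably many) normed setups, form a weighted infinite sum with summable positive weights, and build $L$ along $X_{ll}$ by always choosing the membership bit that keeps the aggregate capital from growing, so that domination $c_i\,d_i^{\pi}\le D^{\pi}$ (valid given non-negative capital) kills every individual setup. The execution differs in one substantive way: you apply the fairness condition directly to the infinite aggregate $D$ and get a non-increasing capital sequence, whereas the paper refuses to ``evaluate'' the infinite object and instead works with the finite truncations $d_w=\sum_{i\le l(w)}4^{-i}d_i$, choosing the bit by fairness of the current truncation and then controlling the tail error $\sum_{i>l(w)}4^{-i}2^{l(w)}$ (this is why it uses weights $4^{-i}$ rather than your $2^{-i-1}$, and why its bound is $\le 2$ plus a small tail rather than $\le 1$). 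Your route is shorter and mathematically cleaner as a pure existence argument; the paper's truncation-based construction is what makes the decision of each bit a finite computation (run $l(w)$ setups for $l(w)$ steps), which is exactly what the subsequent corollary on the time complexity of the resulting $L$ relies on — your comparison of two infinite series gives no such effectivity. You are also right, and it is worth saying explicitly, that both arguments lean on capital values being non-negative: this is needed both for the ``capital at most doubles per step'' estimate (so the series converges termwise) and for the pointwise domination of each $c_i d_i^{\pi}$ by the aggregate (in the paper, for the claim that success of some $d_j$ forces success of $d$); the paper uses these facts without stating the non-negativity hypothesis.
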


\begin{proof}
Let us $(d_i)_{i=0}^{\mathbb{N}}$ be enumeration of all normed setups with possible repetitions. Consider following infinite sum of setups:
\begin{equation}
d = \sum_{i=0}^{\infty}4^{-i}d_i
\end{equation}
We claim that if $d$ does not succeed on an input $X_{ll}\circ L$, then $L$ should be random under $X_{ll}$. To see a reason for this, let us assume that $L$ is not random. Then there is a normed setup $d_j$ succeeding on it, i.e. capital values in $d_j^{\pi}(X_{ll}\circ L)$ are unbounded. For this reason the capital values corresponding to $d$ on the language $L$ should also be unbounded, hence $d$ succeeds on $L$. In the light of this observation, it suffices to constuct a language $L$ on which $d$ happens to fail. Since $d$ satisfies the fairness condition, knowledge of $d(s,t)$ for any datapoint $t$ would allow us to choose membership values for words so that value of capital does not increase from the original $\sum_{i=0}^{\infty}4^{-i} = \frac{4}{3}$. However, $d$ is an infinite object, let alone automatic, so there is no way of directly computing it. One remedy is to use some kind of approximation. Let us define few notions before we proceed with our approximation. Given a regular domain $D$, let us define a function $l:D\to \mathbb{N}$ counting the number of predecessors of the given word $w$ in $D$
\begin{equation}
l(w) = \vert \{v\in D:v\le_{ll}w\}\vert
\end{equation}
With a help of this function, we define an approximation for $d$ given a word $w\in D$
\begin{equation}
d_w = \sum_{i=0}^{l(w)}4^{-i}d_i
\end{equation}
Recall that both $d^{\pi}$ and $d_w^{\pi}$ generate sequences of capital values given some input $Z=X_{ll}\circ L$. Let us compare the values of entries at the position $l(w)$, i.e. the position corresponding to processing of $w$
\begin{align}
d^{\pi}(Z)(l(w)) - d_w^{\pi}(Z)(l(w)) &= \sum_{i=l(w)+1}^{\infty}4^{-i}d_i^{\pi}(Z)(l(w))\\
& \le \sum_{i=l(w)+1}^{\infty}4^{-i}2^{l(w)} \\
&= \frac{1}{3}2^{-l(w)}
\end{align}
The above inequality follows from a simple observation that the capital value increases at most by a factor of two in a single transition. Taking into account this observation, boundedness of $d^{\pi}(Z)$ as a sequence would follow from boundedness of $\{d_w^{\pi}(Z)(l(w)))\}_w$ as a sequence of $w\in D$. In order to achieve latter, we construct the language $L$ following given inductive procedure. Suppose memberships of words up to $w\in D$ in length-lexicographic order, $\{v\in D:v<_{ll}w\}$, has been settled. Next we need to decide a membership of $w$. We wish to ensure that difference between $d_w^{\pi}(Z)(l(w))$ and $d_v^{\pi}(Z)(l(v))$ is rather small, where $v$ is a predecessor of $w$ in $D$ according to length-lexicographic order. We have that
\begin{equation}
d_w^{\pi}(Z)(l(w)) = d_v^{\pi}(Z)(l(w)) + 4^{-l(w)}d_{l(w)}^{\pi}(Z)(l(w))
\end{equation}
Since $d_v$ satisfies the fairness condition, we can choose membership of $w$ in $L$ so that $d_v^{\pi}(Z)(l(w))\le d_v^{\pi}(Z)(l(v))$. Having chosen membership of $w$ in $L$ accordingly, we end up with following:
\begin{align}
d_w^{\pi}(Z)(l(w))&=d_v^{\pi}(Z)(l(w)) + 4^{-l(w)}d_{l(w)}^{\pi}(Z)(l(w))\\  
&\le d_v^{\pi}(Z)(l(v)) + 4^{-l(w)}2^{l(w)}\\
&=d_v^{\pi}(Z)(l(v)) + 2^{-l(w)}
\end{align}
Provided this construction persists in this manner, for any word $w\in D$, we have
\begin{equation}
d_w^{\pi}(Z)(l(w)) \le \sum_{i=0}^{l(w)}2^{-i} \le 2
\end{equation}
Since above expression is uniformly bounded for all $w\in D$, we have achieved our objective. 
\end{proof}
As a remark, let us add that it is possible to analyze computational complexity of the resulting language $L$. Observe that to decide a membership of a word $w$ in $L$, we need to run $l(w)$ many setups on a an input sequence of length $l(w)$. Thus, corresponding time complexity is $O((l(w))^2)$. For any regular domain $D$, $l(w)$ is either of polynomial size or of exponential size with respect to $\vert w \vert$ \cite{Trofimov1981}. This observation leads to following corollary.

\begin{corollary}
Let $L$ be a language obtained as a result of above theorem. Then $L$ is of polynomial time complexity if $D$ is of polynomial growth, and of expoential time complexity if $D$ is of exponential growth. 
\end{corollary}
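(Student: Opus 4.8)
The plan is to turn the inductive construction from the proof of the preceding theorem into an explicit decision procedure for $L$ and then read off its running time using the growth dichotomy for regular domains. Recall that the membership of $w\in D$ is determined by replaying that construction over all $l(w)$ predecessors $v_1<_{ll}\dots<_{ll}v_{l(w)}=w$ of $w$ in $D$: at the $j$-th stage one already knows the memberships of $v_1,\dots,v_{j-1}$, advances the approximating setups $d_0,\dots,d_{j-1}$ one step along the stream $X_{ll}\circ L$, and then uses the fairness condition of $d_{v_{j-1}}=\sum_{i=0}^{j-1}4^{-i}d_i$ to pick the bit $L(v_j)$ that keeps $d_{v_{j-1}}^{\pi}$ from increasing. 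So a membership query for $w$ is a bottom-up simulation of $l(w)$ stages; this is exactly the origin of the $O((l(w))^2)$ estimate in the remark above, and the task is to make it precise and to confirm it is genuinely polynomial in $l(w)$.

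First I would fix a computable enumeration $(d_i)_i$ of all normed setups (with repetitions), as the theorem's proof requires: enumerate DFAs, keep those defining a total automatic function $f:S\times T\to S$ with $\pi(s_0)=1$, and among those keep the ones satisfying the fairness conditions. The crucial point is that the fairness conditions carve out an automatic relation on $(s,x)$ — they are first-order over the automatic structure on $\mathbb{Q}_2$ together with the automatic function $f$ and $\pi$ — so their validity is decidable, which makes the enumeration computable. Next I would bound the cost of a single stage $j$: we keep the current states of $d_0,\dots,d_{j-1}$, apply each transition once, and evaluate $\sum_{i<j}4^{-i}\pi(f_i(\cdot,(v_j,b)))$ for $b\in\{0,1\}$, choosing the smaller value. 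Each transition of an automatic function costs time linear in the size of its argument, and since automatic functions lengthen their argument by at most an additive constant, after $j$ steps the state of $d_i$ has size $O(|w|+c_i j)$ where $c_i$ is the expansion constant of $f_i$; since $j\le l(w)$ and, by the growth dichotomy invoked below, $|w|=O(\operatorname{poly}(l(w)))$ in either regime, every quantity involved stays polynomially bounded in $l(w)$. Summing $\operatorname{poly}(l(w))$ over $l(w)$ stages yields a membership test for $w$ running in $\operatorname{poly}(l(w))$ time.

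Finally I would invoke the theorem of Trofimov \cite{Trofimov1981} that the census function $l(w)=|\{v\in D:v\le_{ll}w\}|$ of any regular language $D$ is either bounded above by a polynomial in $|w|$ or grows as $2^{\Theta(|w|)}$, with nothing in between. In the polynomial-growth case the membership test for $w$ runs in $\operatorname{poly}(l(w))=\operatorname{poly}(|w|)$ time, so $L$ is decidable in polynomial time; in the exponential-growth case $\operatorname{poly}(l(w))=2^{O(|w|)}$, so $L$ is decidable in exponential time. This gives both halves of the corollary.

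I expect the main obstacle to be the bookkeeping in the second step: one must check that neither the memory components of the enumerated setups nor the descriptions of the setups themselves blow up super-polynomially in $l(w)$ over the course of the simulation. The former rests on the linear-expansion property of automatic functions together with the fact that all words fed along $X_{ll}$ up to $w$ have length $O(|w|)$; the latter rests on the observation that the $j$-th valid normed setup is witnessed by a DFA of only $O(\log j)$ states, so producing it takes $\operatorname{poly}(j)$ time. Getting these bounds uniform over the whole enumeration $(d_i)_i$ is precisely what makes the polynomial estimate legitimate rather than merely per-setup. A secondary but essential point to spell out is the decidability of the fairness conditions, since without it the enumeration $(d_i)_i$ — and hence $L$ itself — would fail to be computable and the statement could not even be posed.
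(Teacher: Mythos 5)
Your proposal is correct and follows essentially the same route as the paper, whose justification is precisely the remark preceding the corollary: deciding $w\in L$ amounts to replaying the construction over the $l(w)$ predecessors of $w$, running $O(l(w))$ setups for $O(l(w))$ steps each, and then invoking Trofimov's dichotomy that $l(w)$ is polynomial or exponential in $\vert w\vert$ for any regular domain $D$. Your additional care about the computability of the enumeration of normed setups (via decidability of the fairness condition over the automatic structure) and about the linear growth of state sizes only makes explicit what the paper leaves implicit.
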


\subsection{Randomness of collections of languages}
In this subsection we are going to investigate randomness of collections of languages. Similar to the case of individual languages randomness of given collection of languages depends very much on the underlying text class. At first, we are going to consider a randomness of collection of languages under the most general class of texts, $\mathcal{T}$. 
\begin{theorem}
A collection of languages $\mathcal{L}$ is nonrandom under the class $\mathcal{T}$ if and only if $\mathcal{L}\subseteq \mathcal{U}$ for some automatic family $\mathcal{U}$.
\end{theorem}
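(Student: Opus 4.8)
The claim is an equivalence, so the plan is to prove the two implications separately, and in both cases to recycle machinery already in the paper. Assume $\mathcal{L}\neq\emptyset$ (otherwise both sides hold trivially).

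For the direction ``$\mathcal{L}\subseteq\mathcal{U}$ for an automatic family $\mathcal{U}=\{U_e\}_{e\in E}$ $\Rightarrow$ $\mathcal{L}$ nonrandom under $\mathcal{T}$'' I would in fact show the stronger statement that $\mathcal{U}$ itself is nonrandom under $\mathcal{T}$; a normed setup succeeding on every $L\in\mathcal{U}$ a fortiori succeeds on every $L\in\mathcal{L}$. The setup I would build uses a one-component memory holding a current guessed index $e\in E$, with $e=\min_{ll}E$ at the start. On a datapoint $(x,b)$ the martingale bets multiplicatively: capital $\mapsto\frac{3}{2}c$ if the guess agrees with the label, i.e. $[x\in U_e]=b$, and $\mapsto\frac{1}{2}c$ otherwise, which respects the fairness condition; on $\#$ it does nothing. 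When the guess disagrees it advances $e$ to the $\le_{ll}$-least index $e''>_{ll}e$ in $E$ with $[x\in U_{e''}]=b$ (and stays put if there is none). The deliberate design choice is that this update consults only the current datapoint, never the history of labels seen so far; this is exactly what keeps the memory finite-dimensional. Automaticity of the whole transition then follows from closure under first-order definability of automatic relations, since the operator ``the $\le_{ll}$-least $e''$ satisfying $\varphi$'' is first-order over the automatic objects $E$, $<_{ll}$ and $\{(x,e):x\in U_e\}$, and multiplication by a fixed dyadic rational is automatic.

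The content of this direction is the correctness argument. Fix a target $L=U_{e^*}\in\mathcal{U}$ and set $e'=\min_{ll}\{e\in E:U_e=L\}$. The sequence of guesses $(e_n)_n$ is non-decreasing in $\le_{ll}$ by construction, and a short induction shows $e_n\le_{ll}e'$ for all $n$: whenever the guess is wrong, $e'$ is itself a correctly-labelled index strictly above the current guess, so the advanced guess cannot overshoot $e'$. Since $\le_{ll}$ is a well-order of order type $\omega$, a non-decreasing sequence bounded above is eventually constant, say $e_n=e_\infty$ for $n\ge N$. For $n\ge N$ the guess never moves again, which forces $[x_n\in U_{e_\infty}]=b_n$ at every non-$\#$ datapoint past stage $N$; hence from stage $N$ on the martingale multiplies its capital by $\frac{3}{2}$ at each of the infinitely many labelled positions and therefore succeeds.

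For the converse — $\mathcal{L}$ nonrandom under $\mathcal{T}$ $\Rightarrow$ $\mathcal{L}$ lies inside some automatic family — I would reuse the adversarial-text construction from the proof of Theorem 1. Let $d=(f,s_0)$ be a normed setup succeeding on every $L\in\mathcal{L}$ under every $X\in\mathcal{T}$, and fix $L\in\mathcal{L}$. Running the greedy adversary that, as long as possible, appends a word on which $f$ fails to strictly increase its capital yields either a valid text holding the capital $\le 1$ forever (impossible by success) or a state $s_L$ with $\pi(f(s_L,x,L(x)))>\pi(s_L)$ for every $x\in D$; by fairness the latter gives $x\in L\Leftrightarrow\pi(f(s_L,x,1))>\pi(f(s_L,x,0))$. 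So every $L\in\mathcal{L}$ equals $L_s$ for some $s$ in the state space $S$, where $L_s:=\{x\in D:\pi(f(s,x,1))>\pi(f(s,x,0))\}$. The point is then simply that $\mathcal{U}:=\{L_s\}_{s\in S}$ is already an automatic family: $S$ is a regular language (its presentation is the convolution of the regular presentation of $\mathbb{Q}_2$ with the memory space $(\Sigma^*)^i$), and $\{(x,s):x\in L_s\}$ is first-order definable from the automatic function $f$, the projection $\pi$ and the order on $\mathbb{Q}_2$, hence automatic. Since $\mathcal{L}\subseteq\mathcal{U}$ we are done, and note that this single family works for all of $\mathcal{L}$ at once.

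I expect the forward direction to be the main obstacle: one must resist storing the history of labels (which would make the learner non-automatic) and instead argue that reacting to the current datapoint alone already forces convergence; the monotonicity-plus-boundedness of $(e_n)$ against $e'$, together with the observation that ``pick the $\le_{ll}$-least index consistent with the latest label'' is an automatic operation, is the heart of the proof. The converse is comparatively routine once Theorem 1's adversary is available — the only new ingredient is the pleasant observation that the family of decoded languages $\{L_s\}_{s\in S}$ is literally of the form demanded by the definition of an automatic family.
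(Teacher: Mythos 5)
Your proposal is correct and follows essentially the same route as the paper: the nonrandomness-to-family direction is the same adversarial-text argument as in Theorem 1, decoding each $L\in\mathcal{L}$ from a capital-increasing state $p$ via the fairness condition and taking the state space $S$ itself as the regular index set, while the family-to-nonrandomness direction is the same enumerative-learning martingale with the $\frac{3}{2}$ versus $\frac{1}{2}$ bet. The only (harmless) deviation is your index update, which jumps to the $\le_{ll}$-least index above the current one consistent with the latest datapoint rather than simply to $succ_E(e)$ as in the paper; both updates are automatic by first-order closure and both stabilize, yours by the monotone-and-bounded-by-$e'$ argument, the paper's because only finitely many successor steps can occur before a correct index is reached.
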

\begin{proof}
\textit{Forward direction}\\
Let $\mathcal{L}$ is nonrandom under the class $\mathcal{T}$. In other words there is a setup $d=(f,s_0)$ such that $d$ succeeds on every $L\in\mathcal{L}$ under every text $X\in\mathcal{T}$. The idea is similar to the one used in the proof of Theorem 1. For every member $L\in \mathcal{L}$, we try to build an adversarial text, which is supposed to reveal the desired property. We construct the text $X$ inductively as follows. Suppose we have constructed $X$ up to prefix of length $n$, i.e. $X[n]=x_0,\, x_1,\ldots, x_{n-1}$. Let $s_n$ be a state obtained from processing $X\circ L[n]$ from initial state $s_0$. Consider a collection of words which do not increasethe capital at the next stage
\begin{equation}
D_n = \{x\in D\mid \pi(f(s_n,x,L(x)))\le \pi(s_n) \}
\end{equation}
If $D_n$ happens to be nonempty, we append its smallest element, in length-lexicographic order, at the end of $X[n]$. Observe that if $D_n\neq\emptyset$ for infinitely many $n$'s, then we succed to construct a text $X$ such that $\pi(s_n)\le 1$ for all states $s_n$ visited. This contradicts the fact that $d$ succeeds on $L$ under $X$. Hence $D_n = \emptyset$ for some $n$. Let $m$ be the least such such stage, and let $p=s_m$. Then we have
\begin{equation}
\pi(f(p,x,L(x)))>\pi(p) \text{ for all }x\in D
\end{equation}
Thus, we can make use of the fairness condition to check a membership in $L$
\begin{equation}
x\in L \Leftrightarrow \pi(f(p,x,1))>\pi(f(p,x,0))
\end{equation}
Observe that a pair $(L,p)$ identify each other uniquely. We let $p$ to be the index of $L$. Given an index $p$, membership of a word $x\in L_p$ is given by automatic relation above. Hence we are left to identify some regular language containing all necessary indices to satisfy our claim. Observe that $p\in S=C\times M=(\Sigma^*)^k$ for some $k$. Finally we have that, $\mathcal{L}\subseteq \{L_p:p\in S\}$, which concludes this direction. \\
\textit{Converse direction}\\
Let $\mathcal{U}=\{U_e: e\in E\}$ be an automatic family of languages such that $\mathcal{L}\subseteq \mathcal{U}$. It suffices to show nonrandomness of $\mathcal{U}$ under the class $\mathcal{T}$. In other words, we need to show an existence of setup $d=(f,s_0)$ succeeding on any $L\in\mathcal{U}$ under any $X\in\mathcal{T}$. In order to succeed on $L\in\mathcal{U}$, we need to identify its index with respect to $\mathcal{U}$ first. The idea is to employ an idea of enumerative learning, where we go over all indices possible with some repetitions in order to find a correct index. Let $\Gamma$ be an alphabet of the regular language $E$. By introducing a total order on $\Gamma$, we can extend it to the length-lexicographic order on $E$, which also happens to be total and automatic. So we can assume an existence of automatic total order on $E$, called length-lexicographic order. The regular language $E$ plays a role of memory space for our desired martingale, so state space $S=C\times E$. In order to construct a desired setup $d=(f,s_0)$, we set $s_0 =(1,e_0)$, where $e_0= \min_{ll}(E)$. As for $f$, we define it as follows
\begin{equation}
f((c,e),x,b) =  
\begin{cases}
(\frac{3}{2}c,e) & \text{ if } L_e(x)=b\\
(\frac{1}{2}c,succ_E(e)) & \text{ otherwise}
\end{cases}
\end{equation}
where $succ_E(e) = \min_{ll}\{d\in E\mid d>_{ll}e \}$ refers to length-lexicographic successor of $e$ in the language $E$. As a language $L$ belongs to the family $\mathcal{U}$, enumeration process of indices is bound to stabilize, i.e. indices will no longer change from some point onwards. After stabilization occurs, the capital increases by a factor of $\frac{3}{2}$ at each transition. Thus, given setup satisfies desired conditions. 
\end{proof}

\paragraph{The class of repetition-free texts} From the most general class of texts, we move to the smaller class of repetition-free texts, $\mathcal{RF}$. Similar to the case of the class $\mathcal{T}$, we obtain the characterization of randomness for this class. 
\begin{theorem}
A collection of languages $\mathcal{L}$ is not random under the class $\mathcal{RF}$ if only if $\mathcal{L}\subseteq \{U_e\triangle F:e\in E\text{ and } F \text{ is finite}\}$, where $\mathcal{U}=\{U_e:e\in E\}$ is some automatic family.
\end{theorem}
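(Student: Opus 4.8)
The plan is to follow the template of Theorem 8, adapting both directions to the smaller text class $\mathcal{RF}$; the symmetric difference with a finite set is precisely what the repetition-free constraint forces into the picture.

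\emph{The ``only if'' direction.} Suppose $d=(f,s_0)$ is a normed setup succeeding on every $L\in\mathcal{L}$ under every repetition-free text. Fix $L\in\mathcal{L}$ and build an adversarial text $X$ as in the converse of Theorem 8, with the single modification that no word may be reused: at stage $n$, having reached state $s_n$, put
\begin{equation}
D_n=\{x\in D\mid x\notin\{X(0),\dots,X(n-1)\}\ \text{and}\ \pi(f(s_n,x,L(x)))\le\pi(s_n)\},
\end{equation}
and append $\min_{ll}(D_n)$ if $D_n\neq\emptyset$, otherwise append $\#$. If $D_n\neq\emptyset$ for infinitely many $n$ this produces a valid text in $\mathcal{RF}$ along which the capital never exceeds $\pi(s_0)=1$, contradicting success; hence $D_m=\emptyset$ for some $m$. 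With $p=s_m$, every $x\in D$ outside the finite set of played words satisfies $\pi(f(p,x,L(x)))>\pi(p)$, so by fairness $x\in L\Leftrightarrow\pi(f(p,x,1))>\pi(f(p,x,0))$. Setting $U_p=\{x\in D\mid\pi(f(p,x,1))>\pi(f(p,x,0))\}$ gives, by closure of automatic relations under first-order definitions, an automatic family $\mathcal{U}=\{U_p:p\in S\}$ over the regular index set $S$, and $L\triangle U_p$ is contained in the finite set of played words. Since $L$ was arbitrary, $\mathcal{L}\subseteq\{U_p\triangle F:p\in S,\ F\ \text{finite}\}$.

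\emph{The ``if'' direction.} Given automatic $\mathcal{U}=\{U_e\}_{e\in E}$ with $\mathcal{L}\subseteq\{U_e\triangle F:e\in E,\ F\ \text{finite}\}$, it suffices to build one normed setup succeeding on every $L=U_e\triangle F$ under every text in $\mathcal{RF}$. The key observation is that along a repetition-free text each word, hence each anomaly in $F$, is presented at most once, so a strategy that commits to the correct hypothesis $U_e$ and multiplies its capital by $\frac{3}{2}$ whenever $U_e$'s prediction agrees with the presented label and by $\frac{1}{2}$ otherwise loses only finitely many bets in total, driving its capital to infinity. The task thus reduces to an automatic-learning problem: enumeratively search $E$ (as in the converse of Theorem 8) for an index $e$ with $U_e\triangle L$ finite and then stay there forever. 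The new subtlety over Theorem 8 is that the correct hypothesis now generates disagreements — its finitely many anomalies — so one cannot advance to $succ_E(e)$ on the first, or on any fixed number of, disagreements; instead the learner should advance only once the number of disagreements seen at $e$ exceeds a threshold that itself grows with the number of data points processed at $e$ (implementable automatically via unary counters stored in the memory $(\Sigma^*)^i$). A correct-up-to-finite hypothesis is then never abandoned, a hypothesis disagreeing with $L$ infinitely often along the text is eventually left behind, and a hypothesis the text only contradicts finitely often need not be left, since it already yields cofinitely many winning bets.

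\emph{The main obstacle.} The delicacy, familiar from automatic learning with anomalies, is that an adversary might steer the learner through $E$ without end, supplying one fresh disagreement at each successive hypothesis and so bleeding the capital toward $0$ without ever permitting stabilization, and this must be ruled out uniformly over the whole class $\{U_e\triangle F\}$. Such a run can only occur when $\bigcup_{e\in E}(U_e\triangle L)$ is infinite; I expect it to be neutralized by running, alongside the enumerative strategy, a strategy that bets permanently according to a fixed regular language — the sum of two setups being again a setup by Theorem 5. Indeed, in the degenerate extreme where $\{U_e\triangle F\}$ collapses to the finite, or to the cofinite, variants of a fixed regular $R$, betting forever as if the target equalled $R$ already succeeds. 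Identifying the correct such guards (one per finite-symmetric-difference class that $\mathcal{U}$ induces on $E$), combining them with the enumerative component so that the result still succeeds on every member, and keeping the whole construction automatic for an arbitrary automatic family — rather than only for the structured families produced by the forward direction — is where the real work of this direction lies.
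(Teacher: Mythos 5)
Your ``only if'' direction is essentially the paper's own argument (repetition-free adversarial text, a stage $m$ with $D_m=\emptyset$, fairness defining $U_p$, the played words absorbing the finite difference) and is fine. The gap is in the ``if'' direction, and it is exactly the point you flag as ``the real work'': your one-pass enumeration with growing thresholds does not succeed, and your proposed repair does not stay automatic. Concretely, the adversary chooses both $L=U_{e^*}\triangle F$ and the repetition-free text after seeing your setup. It rushes you through the indices below $e^*$ with one fresh disagreement each, spends elements of $F$ (chosen large and disjoint from the words already used) to push you past $e^*$ while your threshold is still small, and thereafter parks you only on indices $e'$ with $U_{e'}\triangle L$ infinite, feeding you nothing but fresh disagreement words; your capital is then non-increasing forever. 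If instead the threshold grows fast enough to block this, you can be pinned forever on a wrong index with infinite difference, again never succeeding. So some mechanism for \emph{returning} to earlier indices is unavoidable. Your guard idea cannot supply it in general: there may be infinitely many finite-difference classes (one per index of $E$, which may be infinite), and an infinite sum of setups is no longer an automatic martingale, as the paper notes right after Theorem 5; a finite sum of guards cannot cover all of $\{U_e\triangle F\}$.

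The paper's device is a two-coordinate dovetailing enumeration, which removes the difficulty outright. Take memory $E^2$ holding a pair $(e,d)$, start at $(e_0,e_0)$ with $e_0=\min_{ll}(E)$, and set
\begin{equation}
f((c,e,d),x,b)=
\begin{cases}
(\tfrac{3}{2}c,\,e,\,d) & \text{if } L_e(x)=b,\\
(\tfrac{1}{2}c,\,succ_E(e),\,d) & \text{if } L_e(x)\neq b \text{ and } e<_{ll}d,\\
(\tfrac{1}{2}c,\,e_0,\,succ_E(d)) & \text{otherwise.}
\end{cases}
\end{equation}
Unless the pair stabilizes, every index of $E$ is visited infinitely often. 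Along a repetition-free text each element of the finite set $F$ appears at most once, so from some stage on no anomaly is ever presented; the first visit to $e^*$ after that stage is permanent (no further disagreement can occur there), and from then on every labeled data point multiplies the capital by $\tfrac{3}{2}$, so the setup succeeds. No thresholds, no auxiliary guards, and automaticity is immediate since only $succ_E$, the reset to $e_0$, $<_{ll}$, the family relation $L_e(x)=b$, and multiplication by the fixed constants $\tfrac{3}{2},\tfrac{1}{2}$ are used. Replacing your stabilization scheme by this pair-based cycling turns your sketch into the paper's proof.
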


\begin{proof}
\textit{Forward direction}\\	
Suppose that $\mathcal{L}$ is a nonrandom collection of languages under the class $\mathcal{RF}$. To show that $\mathcal{L}$ is of stated form, we employ the strategy similar to the one used in the proof of Theorem 1. Let an automatic setup $d=(f,s_0)$ to realize nonrandomness of $\mathcal{L}$ under $\mathcal{RF}$. Given an arbitrary language $L\in\mathcal{L}$, we are going to construct a text $X\in \mathcal{RF}$ which is going to act against $d$ on $L$. The construction proceeds inductively based on the behaviour of $d$ on the already constructed part. Suppose $X$ has been constructed up to length $n$, $X[n] = x_0,x_1,\ldots,x_{n-1}$. Let $s_n$ be a state resulting in the action of $X\circ L[n]$ based on $d$, where $L\in\mathcal{L}$. Consider a collection of words $D_n$ defined as follows
\begin{equation}
D_n = \{x\in D\mid \pi(f(s_n,x,L(x)))\le \pi(s_n) \text{ and } x\neq x_i \text{ for }i\le n-1 \}
\end{equation}
If $D_n$ happens to be nonempty, we choose the least element in length-lexicographic order to extend existing $X[n]$. If $D_n\neq\emptyset$ for infinitely many $n$'s, then we succeed in constructing the text $X\in \mathcal{RF}$ such that $\pi(s_n)\le 1$ for visited states $s_n$. As it contradicts to initial assumption, $D_n=\emptyset$ for some $n$. Let $m$ be the smallest such stage and $p=s_m$. So, we have
\begin{equation}
x\not\in\{x_0,x_1,\ldots,x_{n-1} \}\Rightarrow \pi(f(s_n,x,L(x)))>\pi(s_n)
\end{equation}
Let us define an approximation to $L$ implied by the above inequality
\begin{equation}
U_p = \{x\in D\mid \pi(f(s_n,x,1))>\pi(f(s_n,x,0)) \}
\end{equation}
a difference between $L$ and $U_p$ is a finite set given by
\begin{equation}
F_p = \{x\in D\mid L(x)\neq U_p(x) \}
\end{equation}
Applying the set difference operator twice, we obtain $L = U_p\triangle F_p$. Since each $p\in S$, we let index set of automatic family to be $E=S=(\Sigma^*)^k$ which is a regular language. Finally, we have
\begin{equation}
\mathcal{L}\subseteq \{U_p\triangle F\mid p\in E,\, F\text{ is finite} \}
\end{equation}
This completes the proof of the forward direction\\
\textit{Converse direction}\\
Let $\mathcal{U}=\{U_e:e\in E\}$ be an automatic family such that $\mathcal{L}\subseteq \{U_e\triangle F: e\in E,\,F\text{ is finite}\}$. To show desired result, it suffices to show nonrandomness of $\mathcal{V}=\{U_e\triangle F\mid e\in E,\,F\text{ is finite} \}$. In other words, we need to construct a setup $d=(f,s_0)$ succeeding on every language $L\in\mathcal{V}$ under any text $X\in \mathcal{RF}$. Again, the idea is to employ enumerative learning approach to find appropriate index of $L$. This time round our task is a bit harder, because even if we might find a right index, a finite difference with a finite set might prevent us from recognizing it. So the trick is to design enumeration procedure so that each index is visited sufficiently often. For this, we set a memory space of desired automatic martingale be a product space $E^2=\{(e,d):e,d\in E\}$. In terms of representation, it corresponds to convoluting elements of $E$. In order to define a desired setup $d=(f,s_0)$, let $s_0=(e_0,e_0)$, where $e_0=\min_{ll}(E)$. As for automatic martingale $f$, it is given as follows 
\begin{equation}
f((c,e,d),x,b) = 
\begin{cases}
(\frac{3}{2}c,e,d) & \text{ if } L_e(x)=b\\
(\frac{1}{2}c,succ_E(e),d) & \text{ if } L_e(x)\neq b \text{ and } e<_{ll}d\\
(\frac{1}{2}c,e_0,succ_E(d)) & \text{ otherwise}
\end{cases}
\end{equation}
Observe that for any element $U_e$ of automatic family and any finite set $F$, there is some word $w$ such that $U_e\triangle F$ and $U_e$ are identical above the given word
\begin{equation}
\forall v\ge_{ll}w \Rightarrow L_e(v)=(L_e\triangle F)(v)
\end{equation}
for this reason, given procedure is bound to stabilize, i.e. a pair of indices will never change from some point onwards. After that the capital value increases by a factor of $\frac{3}{2}$ at each transition. Hence the given setup is successful. This completes the proof of converse direction. 
\end{proof}
\paragraph{Analysis of the collection introduced earlier} In the above theorem, we have encountered the collection of languages of the form $\{L_e\triangle F\mid e\in E,\,F \text{ is finite}\}$ where $\{L_e:e\in E\}$ is an automatic family. One might ask if this notion is any different from the notion of an automatic family. We address this issue in the following theorem.
\begin{theorem}
Let $D$ be a regular domain over $\Sigma$. Let $\{L_e:e\in E \}$ be an automatic family in $D$ and $\mathcal{F}$ be a collection of finite languages in $D$. The following are equivalent
\begin{enumerate}
\item The collection given by $\{L_e\triangle F: e\in E, F\in \mathcal{F} \}$ is an automatic family. 
\item There is a bound $c$ such that for all $n$, $|D\cap \Sigma^{\le n}|\le cn+c$.
\item There is a bound $c$ such that for all $n$, $|D\cap \Sigma^n | \le c$.
\end{enumerate}
\end{theorem}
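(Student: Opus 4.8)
The plan is to establish the cycle $(2)\Rightarrow(3)\Rightarrow(1)\Rightarrow(2)$, with the bulk of the work lying in $(3)\Rightarrow(1)$ and in the contrapositive of $(1)\Rightarrow(2)$. The equivalence $(2)\Leftrightarrow(3)$ is essentially elementary: if every length-slice $D\cap\Sigma^n$ has size at most $c$, then $|D\cap\Sigma^{\le n}|\le c(n+1)$, which gives $(3)\Rightarrow(2)$; conversely, if some slice $D\cap\Sigma^{n_0}$ is large, one uses the pumping lemma on $D$ to show the slices grow without bound along an arithmetic progression of lengths, and a more careful count (or simply noting $|D\cap\Sigma^{\le n}|\ge |D\cap\Sigma^n|$) contradicts the linear bound. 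Actually the cleanest route is to argue $\neg(3)\Rightarrow\neg(2)$ directly: unboundedly large single slices already force $|D\cap\Sigma^{\le n}|$ to exceed any fixed linear bound, so I would package $(2)\Leftrightarrow(3)$ as a short self-contained combinatorial lemma about regular languages before touching the automatic-family content.

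For $(3)\Rightarrow(1)$: assuming each slice $D\cap\Sigma^n$ has at most $c$ elements, I want to exhibit a regular index language $E'$ and an automatic relation presenting $\{L_e\triangle F: e\in E, F\in\mathcal F\}$. The key point is that under $(3)$, a finite set $F\subseteq D$ of ``small diameter'' can be encoded by a bounded amount of data attached to a length parameter. Concretely, any finite $F\subseteq D$ is contained in $D\cap\Sigma^{\le m}$ for some $m$, and $|D\cap\Sigma^{\le m}|\le c(m+1)$, so $F$ is determined by its characteristic function on a set whose size is linear in $m$ — but that is still unbounded, so naive encoding fails. The fix is to encode $F$ not as a subset but as a string of the same shape as the words it contains: list the elements of $F$ in length-lexicographic order; since each length contributes at most $c$ words, one can store, for each length $\ell\le m$, the (at most $c$) indicator bits of which of the $\le c$ words of that length lie in $F$, laid out as extra tracks running in parallel with the word being tested. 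This is exactly the situation where convolution works: a new index is a pair $(e, \hat F)$ where $\hat F\in(\Sigma^*)^{O(c)}$ is a bounded-width ``mask'' aligned lengthwise with $D$, and membership of $x$ in $L_e\triangle F$ is first-order definable from: $x\in L_e$ (automatic, since $\mathcal U$ is automatic), the length of $x$, and whether $x$ is flagged by $\hat F$ at its own length (automatic because reading $x$ and $\hat F$ synchronously lets the automaton check the relevant bounded block). Verifying that the resulting index language is regular and the membership relation automatic is then a routine application of the first-order closure Proposition.

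For $\neg(3)\Rightarrow\neg(1)$, which I expect to be the main obstacle: suppose the slices $D\cap\Sigma^{n}$ are unboundedly large. I want to show $\{L_e\triangle F\}$ is not automatic. The strategy is a pumping/counting argument on any hypothetical automatic presentation $\{(x,i)\mid x\in V_i\}$ with index language $I$: fix a single $L_e$ (say $L_e=\emptyset$ if $\emptyset$ is in the family, or else work relative to a fixed member), so that the family contains all finite subsets of $D$ that arise as symmetric differences. Pick a length $n$ with $|D\cap\Sigma^n|=N$ huge; there are $2^N$ distinct finite subsets of $D\cap\Sigma^n$, each a distinct language in the collection, hence requiring $2^N$ distinct indices. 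But an automatic relation has the property that for inputs of bounded length the distinguishable behaviour is bounded: more precisely, by the pumping lemma applied to the convolved automaton, two indices $i,i'$ of length within the pumping window that agree on all words of length $\le$ some linear function of the state count must define the same language — so the number of distinct languages realizable with short indices is at most exponential in that index length, while short words can only be $O(\text{poly})$ in number, giving a contradiction once $N$ outpaces the available index budget. The delicate part is making the quantifiers line up: I need that arbitrarily many distinct members of the collection are ``concentrated'' on words of a single bounded length while the indices cannot keep pace, and the precise bookkeeping — relating the pumping constant of the presentation, the index lengths it forces, and the slice size $N$ — is where the real care is required. I would isolate this as the technical heart of the theorem and present the $(2)\Leftrightarrow(3)$ and $(3)\Rightarrow(1)$ directions first as warm-ups.
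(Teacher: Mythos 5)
Your $(3)\Rightarrow(2)$ and $(3)\Rightarrow(1)$ are sound and essentially the paper's own arguments: the bounded-width ``mask'' $\hat F$ running lengthwise alongside the tested word is exactly the paper's encoding of the characteristic $c$-tuple of $F\cap D\cap\Sigma^n$ by a letter of a larger alphabet, and checking the flag at length $|x|$ is first-order (hence automatic) because the rank of $x$ inside its slice is bounded by the fixed constant $c$. However, your preferred shortcut for $\neg(3)\Rightarrow\neg(2)$ --- ``simply noting $|D\cap\Sigma^{\le n}|\ge|D\cap\Sigma^n|$'' --- does not work: for $D=0^*10^*$ one has $|D\cap\Sigma^n|=n$, so no single slice ever exceeds the linear bound $n+1$ even though the slices are unbounded. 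The conclusion is true for regular $D$, but only via the pumping argument you mention in passing (pump a common length-$p$ prefix of a large slice so that $\Omega(k)$ words are spread over $\Omega(n)$ lengths, making the cumulative count superlinear), which is precisely what the paper's $(2)\to(3)$ step does; it is not a triviality to be packaged away.

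The genuine gap is in your $\neg(3)\Rightarrow\neg(1)$, which you yourself identify as the technical heart. Your counting contradiction needs an \emph{index budget}: you must show that the $2^N$ pairwise distinct languages $L_e\triangle F$ with $F\subseteq D\cap\Sigma^n$ are forced to have indices of length roughly $n$ plus a constant, so that there are too few index strings to go around. Nothing in the definition of an automatic family supplies this: indices may be arbitrarily long and redundant, and the pumping fact you invoke runs in the wrong direction --- it bounds the lengths of words in a finite $V_i$ by $|i|+c$, not the length of some index of $V_i$ by the lengths of its words; likewise ``indices agreeing on short words define the same language'' bounds the number of languages with \emph{short} indices, but does not force your languages to have short indices at all. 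The paper closes exactly this hole: from a hypothetical presentation $\{U_i\}_{i\in I}$ it passes to $\mathcal H=\{U_i\triangle U_j\}_{j\in I}$, which contains every finite subset of $D$, and then uses first-order definability to obtain the automatic map $j(x)=\min_{ll}\{j\mid\forall k\,[H_k\subseteq\{y\mid y\le_{ll}x\}\Rightarrow\exists r\le_{ll}j\,(H_r=H_k)]\}$; automaticity of this map gives $|j(x)|\le|x|+c$, which is the missing budget, and then $2^{|D\cap\Sigma^{\le n}|}\le|\Gamma|^{n+c}$ yields the linear bound of $(2)$ directly (note the paper thereby proves $(1)\Rightarrow(2)$, not merely $(1)\Rightarrow(3)$). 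Without some analogue of this definable index-compactification step, your bookkeeping cannot be made to line up, because the quantitative statement about index lengths that your count relies on is simply not available.
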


\begin{proof}
$(1)\to (2):$\\
Let $\mathcal{L} = \{L_e\triangle F\mid e\in E,\, F\text{ is finite} \}$ be an automatic family for some automatic family $\{L_e \}_{e\in E}$. Suppose that $I$ is an index set corresponding to the automatic family $\mathcal{L}$. Observe that for any $U_i\in\mathcal{L}$ and $F$ finite, $U_i\triangle F\in \mathcal{L}$, i.e. $\mathcal{L}$ is closed under symmetric difference with finite sets. Fixing some index $i\in I$, let us consider a following collection of languages
\begin{equation}
\mathcal{H} = \{H_j \}_{j\in I}, \text{ where } H_j = U_i \triangle U_j
\end{equation}
Since an operation of symmetric difference is first-order definable, $\mathcal{H}$ is an automatic family. Moreover, any finite set $F\subseteq D$ belongs to $\mathcal{H}$ for a simple reason that $U_i\triangle(U_i\triangle F) = F$. This means that for any finite set $F$ there is a corresponding index $j\in I$ such that $H_j = F$. Suppose we are given some word $x\in D$. Consider the collection $F_x$ of all finite sets in $D$ consisting of elements no greater than $x$ in length-lexicographic order, i.e. $F_x = \mathcal{P}(\{y\mid y\le_{ll}x \})$. It is clear that for any element of $F_x$ there is a corresponding index $j\in I$. Our aim is to compactify the space of indices corresponding to $F_x$. In other words, we are looking for an index $j(x)$ such that for any $F\in F_x$ there is an index $j$ such that $H_j = F$ with $j\le_{ll}j(x)$. Observe that $j(x)$ can be defined in the first-order fashion as follows
\begin{equation}
j(x)=\min_{ll}\{j\mid \forall k \left[ [\forall y\in H_k \Rightarrow y\le_{ll}x]\Rightarrow [\exists\, r\le_{ll}j\,(H_r = H_k)]\right] \}
\end{equation}
Automaticity of the above mapping gives us
\begin{equation}
\vert j(x) \vert \le \vert x \vert + c
\end{equation}
for some $c$, due to the pumping lemma. Let $x_n$ be lexicographically largest element of $D_{\le n}$. Then $\vert F_{x_n}\vert = 2^{\vert D_{\le n}\vert}$. On the other hand any element of $F_{x_n}$ has a corresponding index no greater than $j(x_n)$. Suppose that $\Gamma$ is an underlying alphabet of $I$. Then ther are at most $\vert \Gamma \vert^{\vert j(x_n)\vert}$ indices  with length no greater than that of $j(x_n)$. Combining two arguments above we get
\begin{equation}
2^{\vert D_{\le n}\vert} = \vert F_{x_n}\vert \le \vert \Gamma \vert^{\vert j(x_n)\vert }\le \vert\Gamma\vert^{n+c}
\end{equation}
Taking $\log$ with base $2$, we obtain that
\begin{equation}
\vert D_{\le n}\vert \le (n+c)\log_2^{\vert \Gamma \vert}
\end{equation}
$(2)\to (3)$:\\
We show contrapositive, i.e. if $\{\vert D_n\vert \}_n$ is unbounded, then $\{\frac{\vert D_{\le n} \vert}{n} \}_n$ is unbounded as well, where $D_n = D\cap \Sigma^n$ and $D_{\le n}=D\cap \Sigma^{\le n}$. Let $\vert D_n \vert = k$ for some $n$. We claim that there is an increasing and unbounded function $f$ such that $\frac{\vert D_{\le m}\vert}{m}\ge f(k)$ for some $m$, uniformly in $k$. Let $p$ be a pumping constant corresponding to the regular domain $D$. Consider prefixes of length $p$ of words belonging to $D_n$. There are at least $\lceil \frac{k}{2^p} \rceil = k_1$ words with the same prefix, say $x$. Let $S$ be a corresponding collection of suffixes extending $x$. By our previous argument, $\vert S \vert \ge k_1$. The regularity of $D$ implies that $x$ can be written as $x=uvw$ such that $x$ and $uv^rw$ are syntactically equivalent for every $r\ge 0$, i.e. for every $y\in \Sigma^*$, $xy\in D\Leftrightarrow uv^rwy\in D$. We set $m=n(p+1)$ and estimate a lower bound for $\frac{\vert D_{\le m}\vert}{m}$. For any $r\le n$ and $y\in S$, $uv^rwy\in D$. Furthermore, each such pair $r,y$ induces a unique word. Thus, $\vert D_{\le m}\vert \ge nk_1$, which implies that
\begin{equation}
\frac{\vert D_{\le m}\vert}{m}\ge \frac{nk_1}{n(p+1)}\ge \frac{k}{2^p(p+1)}=f(k)
\end{equation}
where $f(x)=\frac{x}{2^p(p+1)}$, which is increasing and unbounded. \\
$(3)\to (1):$\\
Due to the first-order definability of symmetric difference operator, it suffices to show that $\mathcal{F}$, the collection of finite subsets of $D$ is an automatic family. Consider some $F\in \mathcal{F}$ and define $F_n = F\cap D_n$. It is clear that $F$ is completely determined by the sequence $\{F_n\}_n$. By the given assumption, there is $c$ such that $\vert D_n \vert \le c$ for all $n$. Given a lexicographic order in $D_n$, each $F_n$ can be associated with a characteristic $c$-tuple, say $(1,0,\ldots,0)$. In a given example of $(1,0,\ldots,0)$, we have that lexicographically least element of $D_n$ belongs to $F_n$, while all other elements do not. By encoding each $c$-tuple with a letter of some alphabet $K$, provided that $\vert K \vert \ge 2^c$, we associate each $c-tuple$ with a letter of $K$ via a map $\phi$. Observe that given encoding is finite, so it can be realized on a finite automaton. Observe that $\phi$ naturally extends to the map from $\mathcal{F}$ to $K^*$ given by
\begin{equation}
\phi(F)=k_0k_1\ldots k_r, \text{ where }\phi(F_n)=k_n
\end{equation}
with $r$ being the largest index such that $F_n\neq \emptyset$. The given extension of $\phi$ can be thought as an indexing function for $\mathcal{F}$. We claim that $\mathcal{F}$ with an index set $K^*$ is an automatic family. Indeed, given a word $x$, with $\vert x \vert =m$ and an index $k = k_0k_1\ldots k_n$, the automata corresponding to the given automatic family checks the value of $k_m$ if exists. Then based on the value of $k_m$, and corresponding $c$-tuple it is possible to determine if $x\in F$ for given $F$. 
\end{proof}
\begin{remark}
Since there are instances of regular domains where $\{\vert D_n\vert \}_n$ is not bounded, take for instance $D=\Sigma^*$, we have that $\{L_e\triangle F\mid e\in E,\, F\text{ is finite} \}$ is not always automatic family for a given automatic family $\{L_e \}_{e\in E}$. 
\end{remark}
The final theorem asserts nonrandomness of the polymial time complexity class $P$ under ordered text, provided that underlying regular domain $D$ is of exponential growth.
\begin{theorem}
Suppose $D$ is a regular domain of exponential growth. Then the class $P$ is nonrandom under the length-lexicographic text, $X_{ll}$. 
\end{theorem}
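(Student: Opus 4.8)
The plan is to exhibit a single normed automatic setup $d=(f,s_{0})$ that succeeds on \emph{every} $L\in P$ under the ordered text $X_{ll}$. The idea marries the enumerative-learning device from the proofs of Theorems 6 and 7 with the observation that, when $D$ grows exponentially, the text $X_{ll}$ hands the martingale an exponentially long ``thinking window'' before any given long word of $D$ arrives --- far more room than a fixed polynomial-time decision procedure needs. Since one automatic function cannot hard-code a machine for an arbitrary $L\in P$, the memory of $f$ will instead carry a current hypothesis, a \emph{clocked program} $(p,q)$ read as ``run $p$ with the step bound $|w|^{q}$'', together with a partial configuration of a fixed universal machine $U$ running $(p,q)$ on a chosen \emph{target word}, a unary counter locating that target, and a flag recording whether the simulation has halted.

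The construction goes in three pieces. (i) Knowing $D$, fix a period $\rho$ and residue $a$ with $|D\cap\Sigma^{n}|\ge c\,\mu^{n}$ for $n\equiv a\pmod{\rho}$ and $n$ large, $\mu>1$; such $\rho,a$ exist by the growth dichotomy for regular languages behind \cite{Trofimov1981}. The $r$-th target $w^{*}_{r}$ is $\min_{ll}\{w\in D:|w|=n_{r}\}$ with $n_{r}=n_{0}+r\rho$, located by the counter $1^{n_{r}}$, which is advanced by appending $1^{\rho}$ each round; this map is first-order definable, hence automatic, and length non-increasing. (ii) Let $U$ be a fixed universal Turing machine with fixed finite state set and tape alphabet, so that for a clocked program $(p,q)$ and input $w$ the run $U(p,w,q)$ halts within $|w|^{q+O(1)}$ steps and returns $p$'s verdict on $w$ if that verdict appears within $|w|^{q}$ steps, ``timeout'' otherwise; here one invokes the standard fact that the one-step configuration relation of a fixed Turing machine is automatic (the convolution automaton guesses the post-step head position to handle left moves). (iii) On an incoming labelled word $(v,L(v))$ of $X_{ll}$, $f$ acts thus: while $v<_{ll}w^{*}_{r}$ and $U$ has not halted, do one step of $U$ and leave the capital untouched; when $U$ halts, raise the flag and record its verdict; while then waiting for the target, pass; when $v=w^{*}_{r}$, if the recorded verdict equals $L(v)$ multiply the capital by $\tfrac32$ and keep the hypothesis, if it disagrees multiply by $\tfrac12$ and advance to the next hypothesis, and if $U$ has not halted (a window too short for the current clock) also advance; in every case begin a new round by advancing the counter, recomputing the target, and reinitialising $U$ on the (possibly new) hypothesis. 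Hypotheses are enumerated one-directionally in stages $q=1,2,\dots$, stage $q$ running through all programs $p$ with $|p|\le q$ under clock $q$, so each program is eventually tested under arbitrarily large clocks. Fairness holds by construction, and automaticity of $f$ follows from Proposition 1 together with automaticity of the one-step relation, of $<_{ll}$, of equality and of the target map; $s_{0}$ has capital $1$ and memory set to the first hypothesis and target.

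For the verification, fix $L\in P$; let $p^{*}$ decide $L$ in polynomial time and pick $q^{*}$ large enough that $|w|^{q^{*}}$ exceeds the running time of $p^{*}$ on $w$ for all $w$ and that $|p^{*}|\le q^{*}$, so $(p^{*},q^{*})$ is sound and occupies a fixed position $N^{*}$ of the enumeration, with $N^{*}$ at least exponential in $q^{*}$. The hypothesis index is non-decreasing and rises by at most one per round, so it can reach $N^{*}$ only at a round at least $N^{*}$, hence past the threshold $r_{1}(q^{*})=O(q^{*}\log q^{*})$ beyond which the inter-target window $l(w^{*}_{r})-l(w^{*}_{r-1})\ge c\,\mu^{\,n_{r}-\rho}$ dominates the running time $|w^{*}_{r}|^{q^{*}+O(1)}$ of $U$ on $(p^{*},\cdot,q^{*})$. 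Thus, if the index ever equalled $N^{*}$, then from that round on the clocked simulation of $p^{*}$ would finish inside the window and return $L(w^{*}_{r})$, every bet would be correct, the capital would be multiplied by $\tfrac32$, and the hypothesis retained --- freezing the index at $N^{*}$ and contradicting unboundedness. Hence the index stabilises at some hypothesis $h$ (perhaps $\ne(p^{*},q^{*})$); but a round retains its hypothesis only after a correct bet, so from stabilisation onward every bet is correct and $\pi$ is multiplied by $\tfrac32$ at each of the infinitely many rounds, whence $\limsup d^{\pi}(X_{ll}\circ L)=\infty$. As $L\in P$ was arbitrary and $d$ is a fixed normed setup, $P$ is nonrandom under $X_{ll}$.

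The crux --- where I expect the real work to lie --- is calibrating the enumeration against the thinking window: a clocked program with exponent $q$ needs a window of length about $|w|^{q}$, which the exponentially growing $X_{ll}$ only guarantees at sufficiently late rounds, so the stages must be spaced so that every hypothesis, in particular a sound one, is first examined only after the window has outgrown its requirement, yet no sound hypothesis is ever permanently skipped. A second point I would only sketch is the automaticity of a single step of $U$ on convolved configurations; the remaining ingredients --- fairness, first-order closure, and the target bookkeeping --- are routine.
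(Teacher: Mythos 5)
Your proposal is correct and follows essentially the same route as the paper's proof: exploit the exponentially long gaps that $X_{ll}$ leaves before each designated target word as a computation window, simulate a universal machine step-by-step using the automaticity of its one-step configuration relation, and identify a suitable hypothesis by enumerative learning, betting $\tfrac{3}{2}$ of the capital on each precomputed verdict. Your use of clocked programs $(p,q)$ with advancement on timeout, together with the stabilization argument that the index can never pass the sound hypothesis, is a refinement of detail rather than a different method; it treats slow or divergent hypotheses more explicitly than the paper, which advances its index only upon an incorrect output and instead appeals to the existence of infinitely many indices for $L$.
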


\begin{proof}
The main idea of the proof is to use exponential gaps provided by length-lexicographic text of $D$ to precompute membership of certain words in a given language $L\in P$. These words need to be arranged so that there is a considerable space between two consecutive elements. More precisely, we construct a sequence of words in $D$, $(a_n)_{n=1}^{\infty}$ such that $\lvert [a_n,a_{n+1}]_{ll} \rvert \ge  \Theta(\gamma^{\vert a_{n+1} \vert})$, with $\gamma>1$, where $\lvert [a_n,a_{n+1}]_{ll}\rvert $ refers to length of a sequence given in length-lexicographic order between elements $a_n$ and $a_{n+1}$. The appearance of the word $a_n$ as an input would signal the machine to start the simulation of $L(a_{n+1})$. One problem is that we do not know the exact algorithm computing membership in $L$, except for the fact that it is of polynomial time complexity. To address this issue, we again use the idea of enumerative learning from algorithmic learning theory. Each time the computed value $U(e,a_{n+1})$ happens to differ from the actual value of $L(a_{n+1})$, we are going to increase an index $e$ in the simulation of an universal Turing machine $U$. From this, it is clear that we have two distinct aims. First, we need to construct elements of desired sequence. Secondly, we need simulate computations of universal Turing machine with subsequent updates of its index. Let us describe our plans of reaching these two aims.
\paragraph{Sequence construction} 
Suppose that at the stage $t$ an element $x$ of sequence arrives as an input, we need to construct a subsequent element $y$ such that $\lvert [x,y]_{ll}\rvert\ge \Theta(\gamma^{\vert y\vert})$. The idea is to choose the smallest string of length at least $t$ in the domain $D$. So:
\begin{equation}
y = \min_{ll}\{z\in D\mid\vert z\vert\ge t\}
\end{equation}
Due to pumping lemma, we have that $\vert y \vert \le t + p$, where $p$ is a pumping constant corresponding to a regular domain $D$. Since $D$ is of exponential growth, there is an integer $k$ such that $\lvert D_{<nk} \rvert \ge 2^n$ for all $n$. Given this, we compute $l(y)$, number of predecessors of $y$ in length-lexicographic order in $D$. We have that:
\begin{equation}
2 ^ {\frac{t - k}{k}}\le l(y)  \le 2^{t+p}
\end{equation} 
Using this inequality, we obtain that:
\begin{equation}
\lvert [x,y]_{ll}\rvert\ge 2^{\frac{t-k}{k}} - t = \Theta(\gamma^{\vert y \vert})
\end{equation}
with $\gamma = 2^{\frac{1}{k}}>1$. Thus, this procedure succeeds in generating a sequence of words satisfying given condition. 
\paragraph{Simulation} Having computed next element $y$ of the sequence, we wish to simulate a computation of $L(y)$ for a given language $L$. We employ enumerative learning, whereby we go over indices of Turing machines, until we find an appropriate one. Given current hypothesis $e$ as an index, we compute $U(e,y)$, where $U$ is an universal Turing machine. Since step-wise transition of universal Turing machine can be realized as an automatic function, we simulate the computation of $U(e,y)$ until word $y$ arrives as an input. In case the computation terminates with an output of $0$ or $1$, we check this output against the corresponding label $L(y)$. If they happen to be equal, we keep the current index. If not, we update current index to its length-lexicographic successor, $d$. Now we present automatic representations of these two procedures. 
\paragraph{Automatic presentation} 
Having described our plan, we need to supply its automatic presentations. A memory space of automatic setup will given as a product of two spaces, $M=M_1\times M_2$, where $M_1$ will be used for sequence construction and $M_2$ will be used for simulation. \\
$M_1$ consists of two values: a counter value $m_1^C$ and a string value $m_1^S$. At each transition, the counter value is appended with $0$ at its end. This ensures that the length of $m_1^C$ at stage $t$ equals to $t$. The string value, $m_1^S$, holds the value of string, $y$, under simulation. When a string $y$ appears as an input $(y,L(y))$, it causes activation of $M_1$ which updates its string value as follows
\begin{equation}
m_1^S = \min_{ll}\{z\in D\mid \vert z \vert \ge m_1^C \}
\end{equation}
After this $M_1$ operates as per normal. Let $m_1 = (\varepsilon, d_0)$ serve as a first component of a starting state coming from $M_1$, where $d_0 = \min_{ll}(D)$.\\
As for $M_2$, it is intended to simulate computations of the universal Turing machine mwntioned earlier. To achieve this end, we represent $M_2$ as a convolution of five standard parts of universal Turing machine, namely:
\begin{itemize}
\item Input, $m_2^I$;
\item Hypothesis, $m_2^H$;
\item State of universal machine, $m_2^S$;
\item Work tape, $m_2^W$;
\item Output tape, $m_2^O$.
\end{itemize}
An input $m_2^I$ holds a value of a string $y$ under simulation. A hypothesis value holds the value of current hypothesis for index, $e$. The state of universal Turing machine $m_2^S$ and the work tape $m_2^W$ are to be interpreted in a standard manner. Finally, an output tape $m_2^O$ holds the result of underlying simulation, if ever computed. At stages when $M_1$ is not activated, $M_2$ proceeds with the simulation assigned to it. Observe that since transitions of $U$ can be realized as an automatic function, automatic martingales are capable of such simulations. At the stage when $M_1$ is activated, $M_2$ compares produced output, if any, with $L(y)$. If they happen to be same, then automatic martingale $f$ assigns $\frac{3}{2}$ of its capital to the outcome corresponding to $m_2^O$. If they happen to be different, then automatic martingale $f$ updates index $e$ to its length-lexicographic successor $d$, making a trivial bet with ratio $1:1$, as it does for non-activation phases. Let $m_2 = (d_0, \varepsilon, \varepsilon, \varepsilon, \varepsilon)$ serve as a second component of a starting state coming from $M_2$. Then the starting state for desired automatic setup is $s_0=(1,m_1,m_2)$. 
\paragraph{Verification} Let us provide a short verification argument for above construction. Let the time complexity of deciding membership in a given language $L$ be $O(p(n))$ for some polynomial $p(n)$. Then the universal Turing machine $U$ simulates the same algorithm with at most $O(p^2(n))$ time complexity. In above construction, allowed time frame for a computation with input of length $n$ is at least $\Theta(\gamma^n)$ for some $\gamma>1$. As an exponential function grows faster than any polynomial function, there is some integer $N$ such that for any input of length $n\ge N$, the time frame provided will suffice for a computation provided a correct index. One caution might be the fact that by this time we might pass some of correct indices for $L$ in an enumeration process. However, one should recall that there are infinitely many indices corresponding to $L$. This shows that given automatic martingale will not make any incorrect bets from some time onwards. Given verification process concludes our construction of desired setup $d=(f,s_0)$. Let us remark that given argument is similar to a finite injury argument from computability theory. 
\end{proof}

\section{Discussion}
\subsection{Summary}
In this paper we investigated randomness of formal languages. In order to do so, we have defined a notion of 
automatic maritngales betting on a sequence of words. With this notion in our hands, we have defined randomness of languages both individually and collectively. We have observed a strong dependence of randomness on underlying class of texts. For example,
a language $L$ is random under the class $\mathcal{T}$ if and only if it is not regular. On the other hand, any computable language is nonrandom under some dynamic text. These results show a wide randomness spectrum that formal languages occupy depending on the text class under consideration. 
\subsection{Open Questions}
This paper tried to pursue a new approach in determinining randomness of formal languages. We have 
analyzed few properties of automatic martingales and random languages, but there seems to be much more
questions lying ahead of us. In this sense, we might have scratched the surface of large body of knowledge on 
random formal languages. To give examples of questions we want to know an answer for, let us provide few
immediate open questions
\begin{itemize}
\item Randomness of context-sensitive languages: given a context-sensitive 
language $L$ in some regular domain $D$, what can we say about its randomness 
under the length-lexicographic text, $X_{ll}$?
\item Are the results given in the paper invariant under arbitrary representations of the capital? 
For our purposes, we used dyadic rationals as capital values. What happens if 
one uses different representation for capital values?
\end{itemize}

\section*{Acknowledgement}
We would like to thank Frank Stephan for many discussions on topics of algorithmic randomness, algorithmic learning theory and automata theory.

\end{document}